\newcommand\restartchapters{\par
  \setcounter{chapter}{0}%
  \setcounter{section}{0}%
  \gdef\@chapapp{\chaptername}%
  \gdef\thechapter{\@arabic\c@chapter}}
\newtheorem{theorem}{\bf {Theorem}}
\newtheorem{remark}{{\bf{Remark}}}
\newtheorem{lemma}{\bf {Lemma}}
\newtheorem{proposition}{{\it Proposition}}
\renewcommand{\algorithmicrequire}{\textbf{Input:}}
\newcommand*{\hili}{\color{black}}
\newcommand{\PS}{{P}_{\mathrm{s}}}
\newcommand{\PSb}{{P}_{\rm{s_b}}}
\newcommand{\PH}{{P}_{\rm{H}}}
\newcommand{\hPSb}{\textbf{h}_{{\rm{p}}\rm{s_b}}^{H}}
\newcommand{\hSbSb}{h_{{\rm{s_b}}\rm{s_b}}}
\newcommand{\hSbRi}{h_{{\rm{s_b}{\rm{r}}}_{i}}}
\newcommand{\gSbP}{|h_{\rm{s_b} {\rm{p}}}|^2}
\newcommand{\gSbRm}{|h_{\rm{s_b} {\rm{r}_\textit{m}}}|^2}
\newcommand{\PT}{\rm{p}}
\newcommand{\PR}{\rm{p}}
\newcommand{\Sbb}{\rm{s_b}}
\newcommand{\Sb}{{\rm{ST}}_{\rm{b}}}
\newcommand{\Rm}{{\rm{r}}_m}
\newcommand{\Ri}{{\rm{r}}_i}
\newcommand{\w}{\textbf{w}_{\rm{p}}}
\newcommand{\subparagraph}{}
\begin{document}
\bstctlcite{IEEEexample:BSTcontrol}

\title{{\hili Full-Duplex Non-Orthogonal Multiple Access Cooperative Overlay Spectrum-Sharing Networks with SWIPT}}
\author{
\IEEEauthorblockN{Quang Nhat Le, Animesh Yadav, Nam-Phong Nguyen, Octavia~A.~Dobre, and Ruiqin Zhao \vspace{-30pt}}
\\
\thanks{Part of this work was presented at the IEEE Global Communications Conference, Waikoloa, HI, Dec. 2019 \cite{in}.}
\thanks{Q. N. Le and O. A. Dobre are with the Dept. of Electrical and Computer Engineering, Memorial University, St. John’s, NL A1B 3X9, Canada (e-mail: \{qnle, odobre\}@mun.ca).}
\thanks{A. Yadav is with the Dept. of Electrical Engineering and Computer Science, Syracuse University, Syracuse, NY 13244, USA (e-mail: ayadav04@syr.edu).}
\thanks{N-P. Nguyen is with the School of Electronics and Telecommunications, Hanoi University of Science and Technology, Hanoi, Vietnam (e-mail: phong.nguyennam@hust.edu.vn).}
\thanks{R. Zhao is with the School of Marine Science and Technology, Northwestern Polytechnical University, Xi’an 710072, China (e-mail: rqzhao@nwpu.edu.cn).}
	}

\maketitle
\begin{abstract}
This paper proposes a novel non-orthogonal multiple access (NOMA) assisted cooperative spectrum-sharing network, in which one of the full-duplex (FD) secondary transmitters (STs) is chosen among many for forwarding the primary transmitter's and its own information to primary receiver and secondary receivers, respectively, using NOMA technique. To stimulate the ST to conduct cooperative transmission and sustain its operations, the simultaneous wireless information and power transfer (SWIPT) technique is utilized by the ST to harvest the primary signal's energy. In order to evaluate the proposed system's performance, the outage probability and system throughput for the primary and secondary networks are derived in tight closed-form approximations. Further, the sum rate optimization problem is formulated for the proposed cooperative network and a rapid convergent iterative algorithm is proposed to obtain the optimized power allocation coefficients. Numerical results show that FD, SWIPT, and NOMA techniques greatly boost the performance of cooperative spectrum-sharing network in terms of outage probability, system throughput, and sum rate compared to that of half-duplex NOMA and the conventional orthogonal multiple access-time division multiple access networks.    
\end{abstract}

\begin{IEEEkeywords}	
Beamforming, full-duplex, non-orthogonal multiple access, simultaneous wireless information and power transfer, spectrum-sharing.
\end{IEEEkeywords}

\section{Introduction}

Non-orthogonal multiple access (NOMA) has been recognized as a potential spectral-efficiency improving technique for the fifth-generation (5G) and beyond wireless networks \cite{13}-[5]. Its underlying principle enables multiple users to concurrently access and transmit their signals in the same spectrum resource block (i.e., time/frequency/code domain) by using different signal signatures for the case of code-domain NOMA (CD-NOMA) or power levels for the case of power-domain NOMA (PD-NOMA). Message passing algorithm and successive interference cancellation (SIC) are used to separate the superimposed signals for CD-NOMA and PD-NOMA, respectively. Since NOMA\footnote{In this work, we use power-domain NOMA and henceforth, we will be referring to it as NOMA.} can enhance the spectral efficiency, user fairness, and realize massive connectivity compared to the conventional orthogonal multiple access (OMA) scheme, it can greatly enhance the performance of wireless networks \cite{13,15}. In \cite{15}, a cooperative relaying system in a device-to-device-NOMA was proposed and its performance was evaluated in terms of scaled system capacity. The proposed system achieved much higher ergodic capacity compared to the conventional OMA system.   

For 5G and beyond networks, in addition to the prerequisite of high spectral efficiency, providing energy-efficient communications is also an important goal. Further, since wireless users are battery-operated, it is more likely that performing cooperative transmisison task will lead to rapid energy exhaustion of their batteries. In order to help mobile users maintain their operations, simultaneous wireless information and power transfer (SWIPT) \cite{19,20} has come out as an efficient way to provide energy and extend the lifetime of energy-constrained wireless devices. In \cite{19}, the authors proposed a SWIPT-based NOMA network in which near users being close to the source acted as energy scavenging relays to help the source forward the data to far users. Closed-form expressions for the outage probability and system throughput were obtained to assess the system performance. The authors in \cite{20} developed the system model where the relay user scavenged energy from the base station (BS)'s NOMA signal and used harvested power to forward the information to the destination. The outage probability was used as a performance metric to evaluate their proposed system's performance. However, the work in \cite{15}-\cite{20} considered half-duplex (HD) relaying mode, where relay nodes cannot simultaneously receive and transmit information in the same frequency band. 

With the advancement in antenna and signal processing technologies, full-duplex (FD) communication mode has attracted much research interest due to its ability in doubling the spectral efficiency by allowing users to simultaneously receive and transmit data in the same frequency band \cite{24}-[11]. In \cite{21}, the authors proposed the user-assisted cooperative NOMA system in which the strong user operating in FD or HD mode forwarded the information message to the weak user. By analysing the system outage probability and ergodic sum rate, the authors concluded that FD NOMA was superior to HD NOMA in the low signal-to-noise ratio (SNR) region. In \cite{4}, the authors considered SWIPT in a cooperative FD NOMA system where the near user relayed the message to the far user by harvesting the radio frequency (RF) energy. The result showed that the effect of self-interference (SI) signal in FD communications was mitigated thanks to SWIPT since additional gain could be achieved. Besides, cognitive radio (CR) is another potential technology that significantly improves the spectrum efficiency by allowing unauthorized secondary users (SUs) to access authorized primary users's (PUs) spectrum. Hence, integrating NOMA to CR networks has a highly conceivable possibility to provide an efficient spectrum use, so that the requirements of 5G and beyond wireless networks, i.e., high spectral efficiency, low latency, and massive connectivity, can be readily achieved \cite{14,18}. Considering multiple-input multiple-output (MIMO) CR-NOMA system, the authors in \cite{3} proposed a novel joint antenna selection algorithm to further enhance the system performance. In \cite{23}, a cooperative multicast for CR-NOMA scheme was developed to improve the outage probability of PU. In order to maximize the harvested energy of SUs and based on a practical non-linear energy harvesting (EH) model, the authors in \cite{26} proposed an optimal resource allocation strategy in a SWIPT-CR-NOMA network, where SUs shared the licensed spectrum with PUs under the condition that the interference caused by SUs was acceptable. In \cite{5}, the cooperative NOMA relay-supported CR network was investigated where SU helped PU by acting as relays and exploited NOMA technique to transmit the PU's and its own messages together to the destination using the licensed spectrum band. Under the same system model as in \cite{5}, \cite{6} focused on user scheduling schemes to improve the outage performance for both primary and secondary systems.           

Unlike the existing works \cite{3}-\cite{26},\cite{7} which considered NOMA underlay CR networks, we propose a novel FD NOMA assisted cooperative overlay spectrum-sharing system with SWIPT that encourages the cooperation between primary and secondary networks. Given that FD, SWIPT, and NOMA are potential spectral and energy efficiency improving technologies for the beyond fifth-generation wireless networks, applying such technologies will not only boost the system performance, but also showcase that they can be operated in tandem in future networks. Although there exists a few works \cite{5,6} which also considered the application of NOMA to overlay CR network, the major differences of our work compared with the existing ones are as follows. Firstly, besides spectrum sharing to secondary transmitters (STs), in order to stimulate STs to perform cooperative relaying functions and sustain their operations in terms of energy, SWIPT is applied in our proposed system where STs can harvest RF energy from the primary transmitter (PT)'s signal and use it for their relaying operations. Secondly, STs operate in FD mode, which allows them to concurrently receive and transmit signals in the same transmission time. As a result, significant increase in the spectral and energy efficiencies of the system is readily achieved. Thirdly, the effect of STs scheduling on the system performance is considered, where the best ST is selected to maximize the harvested power and improve the reception quality of primary and secondary networks. The main contributions of our paper are summarized as follows:

\begin{itemize}
\item We propose an opportunistic ST selection method to choose one best FD ST, which has the best channel connection to the PT, among multiple ones. Besides, to the best of our knowledge, this is the first work considering the application of FD, SWIPT, and NOMA in cooperative spectrum-sharing networks.
\item We characterize the performance of primary and secondary networks in terms of outage probability and system throughput over Rayleigh fading channels. To this end, the tight closed-form approximation expressions are obtained for outage probability and system throughput of both primary and secondary networks. Through numerical results, it is shown that the application of FD and SWIPT techniques significantly improve the proposed system performance compared to HD and the conventional OMA-time division multiple access (OMA-TDMA) schemes. 
\item We propose an efficient algorithm which solves the power allocation problem to maximize the sum rate of primary and secondary networks. The obtained results show the superior performance of FD compared to its HD counterpart.    
\end{itemize}

The remainder of this paper is organized as follows. Section \ref{sec:sys} describes the system model. In Section \ref{sec:per}, the outage probability and system throughput for both primary and secondary networks are successfully derived. In Section \ref{sec:sum}, a novel power allocation optimization algorithm is proposed to maximize the sum rate of primary and secondary networks. Numerical results and discussions are shown in Section \ref{sec}. Finally, the paper is concluded in Section \ref{sec:con}.

\textit{Notations:} Bold lowercase letters denote vectors and lowercase characters stand for scalars. $||\cdot||$, $(\cdot)^H$ and $|\cdot|$ correspond to the Euclidean norm, the Hermitian operator, and the absolute value, respectively. $\mathbb{E}[\cdot]$ represents the expectation operation, $\Pr(\cdot)$ denotes probability, and $\mathbb{C}$ is the set of complex-valued numbers.

\section{System Model}\label{sec:sys}
\subsection{System Description}

\begin{figure}
	\begin{center}
		\epsfxsize=8cm \epsfbox{./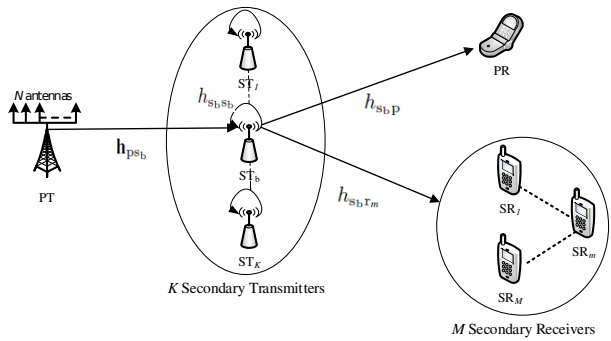} \caption{Illustration of FD NOMA-supported cooperative overlay CR network.}\label{fig:sys1}
	\end{center}
\end{figure}

As illustrated in Fig. 1, we consider a cooperative overlay spectrum-sharing system consisting of one pair of primary transceivers denoted by PT and PR, \textit{K} STs denoted by ST$_{k}$, $k=1,2,\ldots,\textit{K}$, and \textit{M} secondary receivers (SRs) denoted by SR$_{m}$, $m=1,2,\ldots,\textit{M}$. All STs operate in FD mode and others operate in HD mode. The PT is equipped with $N$ antennas, $n=1,2,\ldots,N$, while the PR and \textit{M} SRs are equipped with one transmitting/receiving antenna. Each ST has two antennas, one for receiving and the other one for transmitting \cite{4,2}. The direct link between the PT and the PR does not exist since the PR is far away from the PT \cite{6,11,8}. Hence, in order to establish communication between PT and PR, we consider an overlay spectrum-sharing scenario where the PT allows STs to access its spectrum resources as a reward for improving the primary reception by cooperative relaying. A best ST (denoted henceforth by ST$_{\rm{b}}$) is selected among \textit{K} STs to  concurrently transmit the primary information together with its own data to the PR and \textit{M} SRs by employing the NOMA technique. The aim of selecting the ST$_{\rm{b}}$ is to obtain the best primary and secondary outage performances. Besides, in order to further encourage the ST$_{\rm{b}}$ to conduct the cooperative relaying function, SWIPT is utilized in our model where the ST$_{\rm{b}}$ can harvest energy from the PT signal and use it for relaying purpose \cite{11}.

The transmission time is partitioned into equal transmission time slots of duration $T$. All wireless channels undergo Rayleigh block fading with coherence time of $T$. The channel coefficients are independent and identically distributed (i.i.d.) from one slot to the next. The channel coefficients from the ST$_{\rm{b}}$ to the PR and the SR$_{m}$ are denoted by complex scalars $h_{{\rm{s}}_b\rm{p}}$ and $h_{{\rm{s}}_b{\rm{r}}_m}$, respectively. Under Rayleigh fading model, the channel gains $|h_{{\rm{s}}_b\rm{p}}|^2$ and $|h_{{\rm{s}}_b{\rm{r}}_m}|^2$ are exponential random variables (RVs) with mean $\mathbb{E}[|h_{{\rm{s}}_b\rm{p}}|^2]=\lambda _{\rm{sp}}$ and $\mathbb{E}[|h_{{\rm{s}}_b{\rm{r}}_m}|^2]=\lambda _{\rm{sr}}$, respectively. The channel vector from the PT to the ST$_{k}$ is denoted by $\textbf{h}_{{\rm{p}\rm{s}}_k} \in \mathbb{C}^{N \times 1}$, where each element follows the Rayleigh distribution. Therefore, the channel gain $||\textbf{h}_{{{\rm{p}}\rm{s}}_k}||^{2}$ follows the Gamma distribution with parameter ($N$, $\lambda _{\rm{ps}}$), where $\lambda _{\rm{ps}}$ denotes the mean. Transmit beamforming is used for PT-ST$_{\rm{b}}$ link to enhance the reception quality of the ST$_{\rm{b}}$. The noise at each receiver is modeled as additive white Gaussian noise (AWGN) with zero mean and variance $\sigma_n$ \cite{5,6}. Further, all nodes are assumed to have perfect channel state information (CSI) to other nodes \cite{5,6,11,8}.    

\subsection{Signal Model}

At the start of the transmission time slot $T$, the ST$_{\rm{b}}$ is selected by the PT according to the following selection criterion:
\begin{equation}
\Sb = \mathop {\arg \max }\limits_{k=1,2,\ldots,K} ||\textbf{h}_{{\rm{p}}{\rm{s}}_{k}}||^2. 
\end{equation}  

Practically, the PT can obtain the CSIs from the $K$ STs by first sending the pilot signals to STs, then STs will estimate and feedback their CSIs to the PT. Thereafter, the PT will select a ST (ST$_{\rm{b}}$) which has the best connection to it. 

After choosing the ST$_{\rm{b}}$, the PT beamforms its signal to the ST$_{\rm{b}}$. The received RF signal sent by the PT at the ST$_{\rm{b}}$ is given by
\begin{equation}\label{eq:PTSn1}
y_{\PT,\Sbb} = \hPSb \w \sqrt{\PS} x_{0} + \hSbSb \sqrt{\PSb} x_s + n_{\Sbb},
\end{equation}
where $\w = \textbf{h}_{{{\rm{p}}\rm{s}}_{\rm{b}}} / ||\textbf{h}_{{{\rm{p}}\rm{s}}_{\rm{b}}}||$ denotes the transmit beamforming vector, $x_{0}$ is the PT's information signal with $\mathbb{E}[|x_{0}|^2]=1$, $x_s$ is the composite transmit signal\footnote{We assume that the FD STs decode the received signal without any delay, and hence, the decoded PT's signal $x_0$ is included in $x_s$.} of the ST$_{\rm{b}}$ including signals of the PR and $M$ SRs with $\mathbb{E}[|x_{s}|^2]=1$. $\PSb$ denotes the transmit power of the ST$_{\rm{b}}$, $\hSbSb$ represents the SI channel, and $n_{\Sbb}$ is the AWGN antenna noise at the ST$_{\rm{b}}$. 

\begin{figure}
	\begin{center}
		\epsfxsize=7cm \epsfbox{./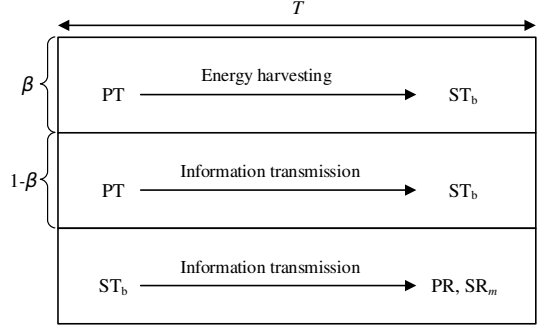} \caption{Block time $T$ for the FD ST$_{\rm{b}}$.}\label{fig:sys2}
	\end{center}
\end{figure}

Fig. 2 describes the transmission time slot $T$ at the ST$_{\rm{b}}$, where the fraction $\beta$ ($0 < \beta < 1$) of the received RF signal power from the PT is used for EH and the remaining (1-$\beta$) fraction of the received RF signal energy for information decoding (ID). Thus, according to \eqref{eq:PTSn1}, the harvested power at the ST$_{\rm{b}}$ is given by \cite{9,30}
\begin{equation}\label{eq:PH1}
\PH = \eta \beta \big(\underbrace{\PS ||\textbf{h}_{{{\rm{p}}\rm{s}}_{\rm{b}}}||^{2}}_{\text{RF EH}} + \underbrace{\PSb |\hSbSb|^2}_{\text{Self-EH}}\big),
\end{equation}
where $\eta$ denotes the energy conversion efficiency, $0 < \eta \leq 1$. The ST$_{\rm{b}}$ harvests both the dedicated energy from the PT and its own energy from the SI channel \cite{30,27}.

From \eqref{eq:PH1}, since the energy harvested from the PT is much larger than that of the receiver noise, we ignore the negligible energy harvested from the receiver noise \cite{9}. Next, the transmit power and the harvested power of the ST$_{\rm{b}}$ should satisfy $\PSb = \xi \psi \PH$ \cite{9}, where $\PSb$ refers to the transmit power of the ST$_{\rm{b}}$, $\xi$ denotes the portion of the harvested power consumed by the power amplifier of the ST$_{\rm{b}}$, and $\psi$ is the energy utilization efficiency, $0 \leq \xi \leq 1$ and $0 < \psi < 1$. Thus, the available transmit power of the ST$_{\rm{b}}$ is expressed as:
\begin{align}\label{eq:PSn}
\PSb &= \xi \psi \PH \nonumber \\ 
& = \eta \beta \xi \psi \big(\PS ||\textbf{h}_{{{\rm{p}}\rm{s}}_{\rm{b}}}||^{2} + \PSb |\hSbSb|^2\big) \nonumber \\
&= \rho \PS ||\textbf{h}_{{{\rm{p}}\rm{s}}_{\rm{b}}}||^{2},
\end{align} 
where $\rho = \frac{\eta \beta \xi \psi}{1 - \eta \beta \xi \psi |\hSbSb|^2}$.  

The sampled baseband signal at the ST$_{\rm{b}}$ is shown as:
\begin{equation}\label{eq:PTSn}
y^{\rm{ID}}_{\PT,\Sbb} = \hPSb \w \sqrt{(1-\beta)\PS} x_{0} + \hSbSb \sqrt{(1-\beta) \PSb} x_s + \sqrt{(1-\beta)} n_{\Sbb} + n_{c,\Sbb},
\end{equation}     
where $n_{c,\Sbb}$ denotes the AWGN circuit noise due to RF to baseband signal conversion. We ignore the antenna noise $n_{\Sbb}$ since its signal strength is much lower than that of the circuit processing noise $n_{c,\Sbb}$ \cite{9}. Hence, \eqref{eq:PTSn} can be rewritten as:
\begin{equation}\label{eq:PTSn2}
y^{\rm{ID}}_{\PT,\Sbb} = \hPSb \w \sqrt{(1-\beta)\PS} x_{0} + \hSbSb \sqrt{(1-\beta) \PSb} x_s   + n_{c,\Sbb}.
\end{equation}  

From \eqref{eq:PSn} and \eqref{eq:PTSn2}, the achievable data rate at the ST$_{\rm{b}}$ to decode $x_{0}$ is given by
\begin{align}\label{eq:RSbxo}
R_{{\Sbb},x_{0}} & = \log_{2} \left(1 + \frac{(1-\beta) \bar \gamma ||\textbf{h}_{{{\rm{p}}\rm{s}}_{\rm{b}}}||^{2}}{(1-\beta) \rho \bar \gamma |\hSbSb|^2 ||\textbf{h}_{{{\rm{p}}\rm{s}}_{\rm{b}}}||^{2}+1}\right),
\end{align}
where $\bar \gamma = \PS/\sigma_n$ denotes the transmit SNR.

The ST$_{\rm{b}}$ first decodes $x_{0}$. If $x_{0}$ is successfully decoded, the ST$_{\rm{b}}$ superimposes $x_{0}$ with $x_{i}$, $i=1,2,\ldots,\textit{M}$, according to the NOMA principle and broadcasts the composite signal $x_s=\sum\limits_{{i} = 0}^{M} \sqrt{\alpha_{i} \PSb} x_{i}$ to all $M+1$ receivers, where $x_{i}$ denotes the information signal intended to the PR (\textit{i} = 0) and SR$_{i}$ (\textit{i} $=1,\ldots,M$), and $\alpha_{i}$ denotes the corresponding power allocation coefficient with condition $\sum\limits_{{i} = 0}^{M} \alpha_{i} = 1$. Accordingly, the composite received signal at the receiver ${\rm{r}}_i$, $0\leq i \leq M$, from the ST$_{\rm{b}}$ can be expressed as:
\begin{equation}\label{eq:SnPR}
y_{{\rm{s_b}},\Ri} = \sum\limits_{{i} = 0}^{M} \sqrt{\alpha_{i} \PSb} \hSbRi x_{i} + n_{\Ri},
\end{equation}
where $h_{\rm{s}_{\rm{b}}r_0}$ is also denoted as $h_{\rm{s}_{\rm{b}}\rm{p}}$.

In this work, we consider the pessimistic case when the PR is grouped with \textit{M} SRs whose channel gains are stronger.\footnote{Note that this is a worst case for PR in terms of outage probability, system throughput, and sum transmission rate. This will be showed later in this paper. For the other cases when the channel gain of the PR is stronger than \textit{M} SRs, the analysis of these cases can be similarly obtained.} Hence, the channel gain of the ST$_{\rm{b}}$-PR link is always smaller than the channel gains of the ST$_{\rm{b}}$-SR$_{m}$ links. Henceforth, we assume that channel gains are sorted as $|h_{{\rm{s}}_{\rm{b}}{{\rm{p}}}}|^2 \leq |h_{{\rm{s}}_{\rm{b}}{{\rm{r}}_{1}}}|^2 \leq \ldots \leq |h_{{\rm{s}}_{\rm{b}}{{\rm{r}}_{M}}}|^2$ at the ST$_{\rm{b}}$ \cite{5,6}. Based on the NOMA principle, the power allocation coefficients used at the ST$_{\rm{b}}$ should fulfil the following condition $\alpha_{0} \geq \ldots \geq \alpha_{M}$ \cite{5,6}.

%

Each SR$_{m}$ performs SIC to distinguish the superimposed signals. The SR$_{m}$ first decodes $x_0$ followed by $x_1,\ldots,x_m$ according to the order of the ST$_{\rm{b}}$-SR$_{m}$ channel gains. Using \eqref{eq:PSn} and \eqref{eq:SnPR}, the achievable data rate at the SR$_{m}$ to decode $x_m$ is expressed as:
\begin{align}\label{eq:Rmxk}
R_{\text{r}_m}&=R_{\Rm,x_m} = \log_{2} \left(1 + {\rm{SINR}}_{\Rm,x_m} \right),
\end{align}
where SINR represents signal-to-interference-plus-noise ratio.

The rate \eqref{eq:Rmxk} is achievable provided that the condition $R_{\text{r}_v,x_m} \geq \bar{R}_m,\, \forall v > m,\, v=m+1, \ldots, M$, is met, where $\bar{R}_m$ denotes the predefined target data rate set for SR$_m$, and  $R_{\text{r}_v,x_m}$ is the achievable data rate at the SR$_v$ to decode $x_m$ with $v > m$, which is expressed as:
\begin{align}\label{eq:Rmxm}
R_{\text{r}_v,x_m} &= \log_{2} \left(1 +{\rm{SINR}}_{\text{r}_v,x_m} \right),
\end{align}    
where
\begin{align}\label{eq:Rmxk_SINR}
{\rm{SINR}}_{\text{r}_v,x_m} & = \log_{2} \left(1 + \frac{\alpha_{m} \rho \bar \gamma ||\textbf{h}_{{{\rm{p}}\rm{s}}_{\rm{b}}}||^{2} |h_{\text{s}_{\rm{b}}\text{r}_v}|^2}{\sum\limits_{{i} = m+1}^{M} \alpha_{i} \rho \bar \gamma ||\textbf{h}_{{{\rm{p}}\rm{s}}_{\rm{b}}}||^{2} |h_{\text{s}_{\rm{b}}\text{r}_v}|^2 + 1} \right).
\end{align}

The SR$_{v}$ then successfully subtracts $x_m$ from the received composite signal. The SIC process will last until its own signal $x_v$ is successfully decoded. Similarly, the achievable data rate at the PR to decode $x_0$ is expressed as:
\begin{align}\label{eq:R_P}
R_{\text{p}}&=R_{\text{r}_0,x_0}  = \log_{2} \left(1 + \frac{\alpha_{0} \rho \bar \gamma ||\textbf{h}_{{{\rm{p}}\rm{s}}_{\rm{b}}}||^{2} |h_{\rm{s}_{\rm{b}}\rm{p}}|^2}{\sum\limits_{{i} = 1}^{M} \alpha_{i} \rho \bar \gamma ||\textbf{h}_{{{\rm{p}}\rm{s}}_{\rm{b}}}||^{2} |h_{\rm{s}_{\rm{b}}\rm{p}}|^2 + 1} \right),
\end{align} 
where $h_{\rm{s}_{\rm{b}}\rm{p}} = h_{\rm{s}_{\rm{b}}r_0}$, and the achievable data rate at the SR$_M$ to decode $x_M$ is expressed as:
\begin{align}\label{eq:R_M}
R_{\text{r}_M}&= R_{\text{r}_M ,x_M}  = \log_{2} \left(1 + \alpha_{M} \rho \bar \gamma ||\textbf{h}_{{{\rm{p}}\rm{s}}_{\rm{b}}}||^{2} |h_{\text{s}_{\rm{b}}\text{r}_M}|^2 \right).
\end{align}
 
Note that if ST$_{\rm{b}}$ fails to decode $x_0$, it will not transmit any signals. This can be explained by the sole purpose that PT grants ST$_{\rm{b}}$ the permission to access primary spectrum only if ST$_{\rm{b}}$ is able to help it send its message to the PR. Thus, the achievable data rates at the PR and the SR$_{m}$ in both cases are shown as:
\begin{equation}
R_{\PR} = R_{\Rm} = 0.
\end{equation}

\section{Performance Analysis}\label{sec:per}

In this section, we will provide the performance analysis for primary and secondary networks in terms of outage probability and system throughput for the proposed system. The outage analysis is useful for practical purposes, as it offers information about the minimum data rate over which a wireless link or a communications system can operate as desired.

\subsection{Primary Network}

\subsubsection{Outage Probability}

In order to calculate the outage probability for the primary network, we need to obtain the cummulative distribution function (CDF) and probability density function (PDF) of $||\textbf{h}_{{\rm{p}}{\rm{s}}_{\rm{b}}}||^2$, which are given in Lemma~\ref{lem:CDF}.

\begin{lemma}\label{lem:CDF} The {\rm{CDF}} and {\rm{PDF}} of $||\rm{\bf{h}}_{{\rm{p}}{\rm{s}}_{\rm{b}}}||^2$ are derived as:
\begin{equation}\label{eq:Fhpsb}
\begin{split}
F_{||\rm{\bf{h}}_{{\rm{p}}{\rm{s}}_{\rm{b}}}||^2}(x) &= \sum\limits_{{l\rm{ = 0}}}^{K} {\left( \begin{array}{l}
	{K}\\
	{l}
	\end{array} \right)} (-1)^l \exp \left( -\frac{l x}{\lambda _{\rm{ps}}} \right) \sum\limits_{{j\rm{ = 0}}}^{l(N - 1)} \frac{C_{j} x^j}{{\lambda _{\rm{ps}}}^j},
\end{split}
\end{equation}  
\begin{equation}\label{eq:Phpsb}
\begin{split}
f_{||\rm{\bf{h}}_{{\rm{p}}{\rm{s}}_{\rm{b}}}||^2}(x) & = \sum\limits_{{l\rm{ = 1}}}^{K} {\left( \begin{array}{l}
	{K}\\
	{l}
	\end{array} \right)} {\left( { - 1} \right)^{{l}}} \exp \left( - \frac{l x}{\lambda _{\rm{ps}}} \right) \sum\limits_{{j\rm{ = 0}}}^{l(N - 1)} \frac{C_{j}}{{\lambda _{\rm{ps}}}^j} \left( j x^{j - 1} - \frac{l}{\lambda _{\rm{ps}}} x^{j} \right).
\end{split}
\end{equation}
\end{lemma}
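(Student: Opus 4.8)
The plan is to obtain $\|\mathbf{h}_{ps_b}\|^2$ as the maximum of $K$ independent Gamma variables and then expand the resulting $K$-th power of a single CDF by purely elementary algebra. First I would record the marginal law: since each entry of $\mathbf{h}_{ps_k}$ is Rayleigh, i.e.\ $|h_{ps_k,n}|^2$ is exponential with mean $\lambda_{ps}$, the channel gain $\|\mathbf{h}_{ps_k}\|^2=\sum_{n=1}^N|h_{ps_k,n}|^2$ is Gamma with integer shape $N$, whose CDF is the truncated exponential series
\[
F_0(x)=1-\exp\!\left(-\frac{x}{\lambda_{ps}}\right)\sum_{j=0}^{N-1}\frac{1}{j!}\left(\frac{x}{\lambda_{ps}}\right)^{\!j},\qquad x\ge 0 .
\]
By the selection criterion (1) the best ST maximizes $\|\mathbf{h}_{ps_k}\|^2$ over $k$, so $\{\|\mathbf{h}_{ps_b}\|^2\le x\}=\bigcap_{k=1}^K\{\|\mathbf{h}_{ps_k}\|^2\le x\}$, and independence across $k$ gives $F_{\|\mathbf{h}_{ps_b}\|^2}(x)=[F_0(x)]^K$.

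Next I would expand this power. Writing $F_0(x)=1-g(x)$ with $g(x)=e^{-x/\lambda_{ps}}\sum_{j=0}^{N-1}\tfrac{1}{j!}(x/\lambda_{ps})^j$, the binomial theorem yields $[F_0(x)]^K=\sum_{l=0}^K\binom{K}{l}(-1)^l e^{-lx/\lambda_{ps}}\big[\sum_{j=0}^{N-1}\tfrac{1}{j!}(x/\lambda_{ps})^j\big]^l$. The bracket is the $l$-th power of a degree-$(N-1)$ polynomial in $x/\lambda_{ps}$, hence a degree-$l(N-1)$ polynomial; collecting equal powers I may write it as $\sum_{j=0}^{l(N-1)}C_j(x/\lambda_{ps})^j$, where by the multinomial theorem $C_j=\sum\prod_{i=1}^l\frac{1}{j_i!}$, the inner sum running over all $(j_1,\dots,j_l)$ with $0\le j_i\le N-1$ and $j_1+\dots+j_l=j$ (equivalently $C_j$ obeys a convolution recursion in $l$ with seed $1/j!$); in particular $C_0=1$. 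Substituting back produces exactly \eqref{eq:Fhpsb}.

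Finally, the PDF follows by differentiating \eqref{eq:Fhpsb} term by term, which is legitimate since all sums are finite. The $l=0$ summand equals the constant $C_0=1$ and so contributes nothing, which is why the sum in \eqref{eq:Phpsb} starts at $l=1$. For each $l\ge 1$ the product rule gives $\frac{d}{dx}\big[e^{-lx/\lambda_{ps}}\sum_j C_j x^j/\lambda_{ps}^j\big]=e^{-lx/\lambda_{ps}}\big(\sum_j \tfrac{C_j}{\lambda_{ps}^j}\,jx^{j-1}-\tfrac{l}{\lambda_{ps}}\sum_j \tfrac{C_j}{\lambda_{ps}^j}x^j\big)$, which matches the summand of \eqref{eq:Phpsb}.

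The only real work is the combinatorial bookkeeping in the multinomial step: pinning down a clean closed or recursive form for $C_j$ and verifying the degree bound $l(N-1)$. This is elementary, and the probabilistic ingredients (the order-statistics identity for i.i.d.\ variables and differentiation of a CDF) are routine, so I anticipate no genuine obstacle — only the need for care with indexing, together with the observation that $C_j$ in fact depends on $l$, a dependence the statement leaves implicit.
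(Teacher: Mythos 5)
Your proposal is correct and follows essentially the same route as the paper: the CDF of the maximum of $K$ i.i.d.\ Gamma$(N,\lambda_{\rm ps})$ gains is the $K$-th power of a single Gamma CDF, expanded via the binomial theorem with the $l$-th power of the truncated exponential series collected into coefficients $C_j$, and the PDF obtained by term-by-term differentiation. The only cosmetic difference is that you express $C_j$ as a multinomial sum (with its convolution recursion), whereas the paper states the equivalent explicit recursion $C_0=1$, $C_1=l$, $C_j=\frac{1}{j}\sum_{p=1}^{\min(j,N-1)}\frac{pl-j+p}{p!}\,C_{j-p}$; your remark that $C_j$ depends on $l$ is accurate.
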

\begin{IEEEproof}
See Appendix \ref{lemma1}.
\end{IEEEproof}

Next, we derive the outage probability for the primary network, where the outage event occurs under two circumstances. The first situation is when the ST$_{\rm{b}}$ can not successfully decode $x_{0}$. The second one occurs when the PR can not decode $x_{0}$ provided that the ST$_{\rm{b}}$ is able to decode $x_{0}$. Accordingly, the outage probability of the primary network is expressed as:
\begin{equation}\label{eq:PP}
P_{\rm{p}} = \Pr \left( R_{{\Sbb},x_{0}} < \bar{R}_0 \right) + \Pr\left( R_{{\Sbb},x_{0}} \geq \bar{R}_0, R_{\PR} < \bar{R}_0 \right),
\end{equation}  
where $\bar{R}_0$ is the target data rate of the primary signal $x_0$.

\begin{remark} Recalling \eqref{eq:PSn} and \eqref{eq:RSbxo}, it is obvious that when the power splitting (PS) coefficient $\beta$ increases, the transmit power ($\PSb$) and the achievable data rate ($R_{{\Sbb},x_{0}}$) of the {\rm{ST}}$_{\rm{b}}$ readily rises. It is also straightforward to see that the achievable data rates of the {\rm{PR}} and {\rm{SRs}}, which are illustrated in \eqref{eq:R_P} and \eqref{eq:R_M}, go up with the increase in $\beta$. Thus, the increase in $R_{{\Sbb},x_{0}}$ leads to the decrease in outage probability of the primary network in \eqref{eq:PP} and the secondary network in \eqref{eq:Prm}. However, when $\beta$ reaches the optimal value at which the lowest outage probability of primary and secondary networks is obtained, and if $\beta$ continues increasing to 1, the achievable data rate of {\rm{ST}}$_{\rm{b}}$ in \eqref{eq:RSbxo} decreases and reaches to 0. This is due to the fact that less power is left for the {\rm{ST}}$_{\rm{b}}$ to decode $x_0$; hence, the outage probability of both networks, as given in \eqref{eq:PP} and \eqref{eq:Prm}, increases and attains 1, which results in the worst performance. This result is also corroborated in {\rm{Fig. \ref{fig:6}}}.   
\end{remark}

\begin{theorem}
 The outage probability of the primary network can be approximated as:
\begin{align}\label{eq:PP1}
P_{\rm{p}} & \approx \sum\limits_{{l\rm{ = 0}}}^{K} {\left( \begin{array}{l}
	{K}\\
	{l}
	\end{array} \right)} (-1)^l \exp \left( -\frac{l \mu}{\lambda _{\rm{ps}} \bar \gamma} \right) \sum\limits_{{j\rm{ = 0}}}^{l(N - 1)} C_{j} \left(\frac{\mu}{{\lambda _{\rm{ps}}} \bar \gamma}\right)^{j} + \iota_{q} \widetilde \sum \Big(\sim\Big) \frac{(-1)^{c+n+l}}{q+c} \frac{C_{j}}{{l}^j} \nonumber \\
& \times \left[ \left(j + \frac{n \Theta_0 l}{\lambda _{\rm{sp}} \lambda _{\rm{ps}} \bar \gamma} \right) \Gamma\left( j, \frac{l \mu}{\lambda _{\rm{ps}} \bar\gamma}\right) - \Gamma\left( j+1, \frac{l \mu}{\lambda _{\rm{ps}} \bar\gamma} \right) - \frac{n \Theta_0 j l}{\lambda _{\rm{sp}} \lambda _{\rm{ps}} \bar \gamma} \Gamma\left(j-1, \frac{l \mu}{\lambda _{\rm{ps}} \bar\gamma}\right) \right],
\end{align} 
where $\iota_q = \frac{Q!}{(q-1)!(Q-q)!}$, $Q = M+1$, $\widetilde \sum = \sum\limits_{{c} = 0}^{Q-q} \sum\limits_{{n} = 0}^{q+c} \sum\limits_{{l\rm{ = 1}}}^{K} \sum\limits_{{j\rm{ = 0}}}^{l(N - 1)}$, and \hfill \\$\Big(\sim\Big) = {\left( \begin{array}{cc}
	{Q-q}\\
	{c}
	\end{array} \right)} {\left( \begin{array}{cc}
	{q+c}\\
	{n}
	\end{array} \right)} {\left( \begin{array}{l}
	{K}\\
	{l}
	\end{array} \right)}$. $\gamma_0 = 2^{\bar{R}_{0}} - 1$, $|\hSbSb|^2 = I_{\rm{SI}}$, $\mu = \frac{\gamma_0}{\left( 1 - \beta \right) \left( 1 - \gamma_0 \rho I_{\rm{SI}} \right)}$, $\Theta_0 = \frac{\gamma_0}{\rho \left( \alpha_0 - \sum\limits_{{i\rm{ = 1}}}^{M} \alpha_i \gamma_0\right)}$, and $\Gamma(\cdot,\cdot)$ is the incomplete gamma function [28, Eq.~(8.350.2)]. Note that $q = 1$ for the {\rm{PR}} and $q=m+1$ for the {\rm{SR}}$_m$. $P_{\rm{p}}$ is derived in \eqref{eq:PP1} when $\gamma_{0} < \min \left[ 1/(\rho I_{\rm{SI}}), \alpha_0/\left(\sum\limits_{{i} = 1}^{M} \alpha_i\right) \right]$, otherwise $P_{\rm{p}} = 1$.
\end{theorem}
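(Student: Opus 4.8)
The plan is to rewrite the two events contributing to $P_{\rm{p}}$ in \eqref{eq:PP} in terms of the two independent random variables $X = ||\textbf{h}_{{\rm{p}}{\rm{s}}_{\rm{b}}}||^2$ and $Y = |h_{{\rm{s}}_{\rm{b}}{\rm{p}}}|^2$, holding $I_{\rm{SI}} = |h_{{\rm{s}}_{\rm{b}}{\rm{s}}_{\rm{b}}}|^2$ fixed, and then to integrate against $f_X$ from Lemma~\ref{lem:CDF} and the order-statistic CDF of the ST$_{\rm{b}}$-to-destination channels. \textbf{For the first term}, clearing the denominator in \eqref{eq:RSbxo} shows that $R_{{\Sbb},x_{0}} < \bar{R}_0$ is equivalent to $(1-\beta)\bar\gamma X (1 - \gamma_0\rho I_{\rm{SI}}) < \gamma_0$. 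If $\gamma_0 \geq 1/(\rho I_{\rm{SI}})$ the left-hand side is nonpositive, the ST$_{\rm{b}}$ is always in outage, and $P_{\rm{p}} = 1$; otherwise the event becomes $\{X < \mu/\bar\gamma\}$, so the term equals $F_X(\mu/\bar\gamma)$, and inserting $x = \mu/\bar\gamma$ in \eqref{eq:Fhpsb} yields exactly the first summation in \eqref{eq:PP1}.

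\textbf{For the second term}, the event $R_{{\Sbb},x_{0}} \geq \bar{R}_0$ is the complement $\{X \geq \mu/\bar\gamma\}$, and clearing the denominator in \eqref{eq:R_P} shows $R_{\PR} < \bar{R}_0$ is equivalent to $\rho\bar\gamma XY\big(\alpha_0 - \gamma_0\sum_{i=1}^{M}\alpha_i\big) < \gamma_0$; if $\gamma_0 \geq \alpha_0/\sum_{i=1}^{M}\alpha_i$ the PR is always in outage and $P_{\rm{p}}=1$, otherwise this is $\{XY < \Theta_0/\bar\gamma\}$. These two borderline situations combine into the stated condition $\gamma_0 < \min[1/(\rho I_{\rm{SI}}),\, \alpha_0/\sum_{i=1}^{M}\alpha_i]$. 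By independence of $X$ and $Y$, the second term equals $\int_{\mu/\bar\gamma}^{\infty} F_Y\!\big(\Theta_0/(\bar\gamma x)\big)\, f_X(x)\, dx$.

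\textbf{Next I would supply the law of $Y$.} Under the pessimistic user grouping, $Y$ is the $q$-th smallest ($q=1$ for the PR; the argument is kept general so it can be recycled in the secondary-network analysis) of $Q = M+1$ i.i.d.\ exponential gains with mean $\lambda_{\rm{sp}}$. Starting from $f_{(q)}(y) = \iota_q [F(y)]^{q-1}[1-F(y)]^{Q-q} f(y)$ with $F(y) = 1 - e^{-y/\lambda_{\rm{sp}}}$ and expanding by the binomial theorem twice, one obtains $F_Y(y) = \iota_q\sum_{c=0}^{Q-q}\sum_{n=0}^{q+c}\frac{(-1)^{c+n}}{q+c}\binom{Q-q}{c}\binom{q+c}{n} e^{-ny/\lambda_{\rm{sp}}}$, which produces the prefactors $\iota_q$, $(-1)^{c+n}/(q+c)$ and $(\sim)$ appearing in \eqref{eq:PP1}. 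Substituting $y = \Theta_0/(\bar\gamma x)$ here and then $f_X$ from \eqref{eq:Phpsb} turns the second term into a finite sum over $c,n,l,j$ of integrals $\int_{\mu/\bar\gamma}^{\infty} x^{j'} e^{-lx/\lambda_{\rm{ps}}} \exp\!\big(-n\Theta_0/(\bar\gamma\lambda_{\rm{sp}} x)\big)\, dx$ with $j' \in \{j-2, j-1, j\}$.

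\textbf{The hard step is this last integral.} Because $e^{-lx/\lambda_{\rm{ps}}}$ multiplied by $\exp(-n\Theta_0/(\bar\gamma\lambda_{\rm{sp}} x))$ has no elementary antiderivative, I would linearize $\exp(-n\Theta_0/(\bar\gamma\lambda_{\rm{sp}} x)) \approx 1 - n\Theta_0/(\bar\gamma\lambda_{\rm{sp}} x)$, valid in the high-SNR regime --- this is the origin of the ``$\approx$'' in \eqref{eq:PP1} and of the $n\Theta_0 l/(\lambda_{\rm{sp}}\lambda_{\rm{ps}}\bar\gamma)$ prefactors. Each surviving integral $\int_{\mu/\bar\gamma}^{\infty} x^{s} e^{-lx/\lambda_{\rm{ps}}}\, dx$ becomes, after $t = lx/\lambda_{\rm{ps}}$, the scaled upper incomplete gamma function $(\lambda_{\rm{ps}}/l)^{s+1}\Gamma\!\big(s+1,\, l\mu/(\lambda_{\rm{ps}}\bar\gamma)\big)$; the powers of $\lambda_{\rm{ps}}/l$ cancel against the $C_j/\lambda_{\rm{ps}}^{j}$ in $f_X$ to leave $C_j/l^{j}$, and collecting the $\Gamma(j-1,\cdot)$, $\Gamma(j,\cdot)$, $\Gamma(j+1,\cdot)$ contributions reproduces the bracketed factor of \eqref{eq:PP1}; what remains is bookkeeping of the finite sums. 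The two principal obstacles are therefore (i) correctly identifying the order-statistic CDF of $Y$ for the pessimistic ordering of the destination channels, and (ii) justifying the high-SNR linearization that makes the $X$-integral closed-form.
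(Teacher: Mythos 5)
Your proposal is correct and follows essentially the same route as the paper's own proof: the same two-term decomposition of \eqref{eq:PP}, reduction of the first term to $F_{||\textbf{h}_{{\rm{p}}{\rm{s}}_{\rm{b}}}||^2}(\mu/\bar\gamma)$ via Lemma~\ref{lem:CDF}, the order-statistic/binomial expansion of the CDF of the sorted $|h_{{\rm{s}}_{\rm{b}}{\rm{p}}}|^2$, the linearization $e^{-\alpha/x}\approx 1-\alpha/x$ of the inner exponential, and evaluation of the resulting integrals as upper incomplete gamma functions. The only cosmetic difference is that you derive the order-statistic CDF from the density $f_{(q)}$ rather than quoting it directly, which leads to the same expression.
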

\begin{IEEEproof}
See Appendix \ref{theorem1}.
\end{IEEEproof}

\subsubsection{Throughput}

Given that the ST$_{\rm{b}}$ transmits information to the PR at a constant rate $\bar{R}_0$ bps/Hz, the throughput of the primary network in the delay-limited transmission mode is computed as \cite{21}:

\begin{equation}\label{eq:tau}
\nu_{\rm{p}} = \left(1 - P_{\rm{p}}\right) \bar{R}_0,
\end{equation}   
where $P_{\rm{p}}$ is shown in \eqref{eq:PP1}.

\begin{remark} From \eqref{eq:tau}, since $P_{\rm{p}}$ is smaller than or equal to 1 and as the transmit {\rm{SNR}} $\bar \gamma$ increases, the outage performance of the primary network greatly improves ($P_{\rm{p}}$ decreases as shown in {\rm{Fig. \ref{fig:3}}}). Hence, the throughput of the primary network $\nu_{\rm{p}}$ goes up with the rise in the transmit {\rm{SNR}} $\bar \gamma$ and converges to a throughput floor which equals to  $\bar{R}_0$ at high $\bar \gamma$. Note that the throughput of the secondary network $\nu_{\rm{s}}$ given in \eqref{eq:taus} also enhances with the increase in the transmit {\rm{SNR}} $\bar \gamma$ and attains a throughput floor which equals to $\sum\limits_{{m\rm{ = 1}}}^{M} \bar{R}_m$ at the high $\bar \gamma$. These results are also verified in {\rm{Fig. \ref{fig:15}}}.
\end{remark}

\subsection{Secondary Network}

\subsubsection{Outage Probability}

In this section, we derive the outage probability for the secondary network. The outage event of the SR$_{m}$ occurs either when the ST$_{\rm{b}}$ can not successfully decode $x_{0}$ or when the SR$_{m}$ fails to decode any $x_{m'}$, $0 \leq m' \leq m$, as long as ST$_{\rm{b}}$ successfully decodes $x_{0}$. Based on this, the outage probability of the SR$_{m}$ is shown as:
\begin{equation}\label{eq:Prm}
P_{{\rm{r}}_m} = \Pr \left( R_{{\Sbb},x_{0}} < \bar{R}_0 \right) + \Pr\left( R_{{\Sbb},x_{0}} \geq \bar{R}_0, \overline{P}_{{\rm{r}}_m} \right),
\end{equation} 
where $\overline{P}_{{\rm{r}}_m}$ denotes the outage probability that the SR$_{m}$ fails to decode any $x_{m'}$, $0 \leq m' \leq m$.

\begin{theorem} The outage probability of the {\rm{SR}}$_{m}$ can be approximated as:
\begin{align}\label{eq:Prm2}
P_{{\rm{r}}_{m}} & \approx \sum\limits_{{l\rm{ = 0}}}^{K} {\left( \begin{array}{l}
{K}\\
{l}
\end{array} \right)} (-1)^l \exp \left( -\frac{l \mu}{\lambda _{\rm{ps}} \bar \gamma} \right) \sum\limits_{{j\rm{ = 0}}}^{l(N - 1)} C_{j} \left(\frac{\mu}{{\lambda _{\rm{ps}}} \bar \gamma}\right)^{j} + \iota_{q} \widetilde \sum \Big(\sim\Big) \frac{(-1)^{c+n+l}}{q+c} \frac{C_{j}}{{l}^j}  \nonumber \\
&\times \left[ \left(j + \frac{n \Theta l}{\lambda _{\rm{sr}} \lambda _{\rm{ps}} \bar \gamma} \right) \Gamma\left( j, \frac{l \mu}{\lambda _{\rm{ps}} \bar\gamma}\right) - \Gamma\left( j+1, \frac{l \mu}{\lambda _{\rm{ps}} \bar\gamma} \right) - \frac{n \Theta j l}{\lambda _{\rm{sr}} \lambda _{\rm{ps}} \bar \gamma} \Gamma\left(j-1, \frac{l \mu}{\lambda _{\rm{ps}} \bar\gamma}\right) \right],
\end{align}
where $\gamma_{m'} = 2^{\bar{R}_{m'}} - 1$, $\bar{R}_{m'}$ denotes the target data rate of $x_{m'}$, $\Theta = \max \left( \Theta_{0},\ldots,\Theta_{m} \right)$, $\Theta_{m'} = \frac{\gamma_{m'}}{\rho \left( \alpha_{m'} - \sum\limits_{{i = m'+1}}^{M} \alpha_i \gamma_{m'} \right)}$, $0 \leq m' \leq m$. Note that, $q = m + 1$ for {\rm{SR}}$_m$.
\end{theorem}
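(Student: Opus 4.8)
The plan is to mirror the derivation of Theorem~1 almost verbatim, since the structure of $P_{{\rm{r}}_m}$ in \eqref{eq:Prm} is identical to that of $P_{\rm{p}}$ in \eqref{eq:PP}: a first term $\Pr\left( R_{{\Sbb},x_{0}} < \bar{R}_0 \right)$ that is common to both networks, plus a second term involving the joint event that ST$_{\rm{b}}$ decodes $x_0$ but the receiver of interest does not. First I would reuse the computation from Appendix~\ref{theorem1} for $\Pr\left( R_{{\Sbb},x_{0}} < \bar{R}_0 \right)$, which after substituting \eqref{eq:RSbxo} reduces to $\Pr\left( ||\textbf{h}_{{{\rm{p}}\rm{s}}_{\rm{b}}}||^{2} < \mu/\bar\gamma \right) = F_{||\textbf{h}_{{\rm{p}}{\rm{s}}_{\rm{b}}}||^2}(\mu/\bar\gamma)$ with $\mu = \frac{\gamma_0}{(1-\beta)(1-\gamma_0\rho I_{\rm{SI}})}$, and then plug in the CDF from Lemma~\ref{lem:CDF}; this produces the first line of \eqref{eq:Prm2}, identical to that of \eqref{eq:PP1}.

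The second step is to handle $\Pr\left( R_{{\Sbb},x_{0}} \geq \bar{R}_0, \overline{P}_{{\rm{r}}_m} \right)$. The key observation is that the event $\overline{P}_{{\rm{r}}_m}$ — that SR$_m$ fails the SIC chain at some stage $m'$ with $0\le m'\le m$ — translates, via \eqref{eq:Rmxk_SINR}--\eqref{eq:R_M}, into $\rho\bar\gamma ||\textbf{h}_{{{\rm{p}}\rm{s}}_{\rm{b}}}||^{2}|h_{\text{s}_{\rm{b}}\text{r}_m}|^2 < \Theta_{m'}$ for that $m'$, so the union over $m'$ becomes a single threshold condition with $\Theta = \max(\Theta_0,\ldots,\Theta_m)$. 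Since $|h_{\text{s}_{\rm{b}}\text{r}_m}|^2$ is an ordered exponential RV (rank $q=m+1$ among $Q=M+1$ sorted gains), I would invoke the order-statistics PDF — the $(q-1)!(Q-q)!$-normalized product of CDFs and one PDF of an $\mathrm{Exp}(1/\lambda_{\rm sr})$ variable, expanded by the binomial theorem into the triple sum $\sum_c\sum_n$ — exactly as in the PR case but with $\lambda_{\rm sp}\to\lambda_{\rm sr}$ and $\Theta_0\to\Theta$. Conditioning on $||\textbf{h}_{{{\rm{p}}\rm{s}}_{\rm{b}}}||^{2}=x$, integrating the ordered-channel tail over $x\ge \mu/\bar\gamma$ against the PDF from \eqref{eq:Phpsb}, and using the standard integral $\int t^{j-1}e^{-at}\,dt$ in terms of $\Gamma(j,\cdot)$ [28, Eq.~(8.350.2)] gives the bracketed expression in the second line of \eqref{eq:Prm2}. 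The tight-approximation step — matching the one made in Theorem~1 — comes from replacing a sum of exponentials arising inside the integrand by a single dominant exponential (or, equivalently, a first-order expansion of the coupled SI/beamforming term), which is what turns an exact-but-unwieldy expression into the stated closed form; I would simply flag that the same approximation as in Appendix~\ref{theorem1} is applied here.

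The main obstacle is bookkeeping rather than any genuinely new idea: one must be careful that the ordering constraint $|h_{{\rm{s}}_{\rm{b}}{{\rm{p}}}}|^2 \le |h_{{\rm{s}}_{\rm{b}}{{\rm{r}}_{1}}}|^2 \le \cdots \le |h_{{\rm{s}}_{\rm{b}}{{\rm{r}}_{M}}}|^2$ makes $|h_{\text{s}_{\rm{b}}\text{r}_m}|^2$ the $q$-th order statistic with the correct $\iota_q$ coefficient, and that the SIC success conditions $R_{\text{r}_v,x_{m'}}\ge\bar R_{m'}$ for $v>m'$ are automatically implied by the threshold on $|h_{\text{s}_{\rm{b}}\text{r}_m}|^2$ thanks to the ordering (so they do not contribute extra factors) — this is precisely why the final answer depends on the channel of SR$_m$ alone through $\Theta=\max_{m'}\Theta_{m'}$. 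I would also note the implicit feasibility condition, parallel to the $\gamma_0$ condition in Theorem~1, that each $\Theta_{m'}$ be positive, i.e. $\alpha_{m'} > \sum_{i=m'+1}^{M}\alpha_i\gamma_{m'}$, for the expression to be valid; otherwise $P_{{\rm{r}}_m}=1$. Assembling the common first term and the ordered-statistics second term yields \eqref{eq:Prm2}.
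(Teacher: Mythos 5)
Your proposal is correct and follows essentially the same route as the paper's proof: the first term is reused verbatim from the proof of Theorem 1, the SIC-failure event is collapsed to the single threshold $\Theta=\max(\Theta_0,\ldots,\Theta_m)$ (with the feasibility condition $\gamma_{m'}\leq \alpha_{m'}/\sum_{i=m'+1}^{M}\alpha_i$ that you note), and the order-statistics CDF of the SR$_m$ channel gain (rank $q=m+1$, binomially expanded) is integrated against the PDF of $||\textbf{h}_{{\rm{p}}{\rm{s}}_{\rm{b}}}||^2$ over $[\mu/\bar\gamma,\infty)$ exactly as for $\Phi_2$ in the primary case, with $\lambda_{\rm{sp}}\to\lambda_{\rm{sr}}$ and $\Theta_0\to\Theta$. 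The only slight imprecision is your description of the approximation step: the paper applies $e^{-\alpha/x}\approx 1-\alpha/x$ for large $x$ to the factor $\exp\left(-n\Theta/(\lambda_{\rm{sr}}\bar\gamma x)\right)$ arising from the conditional CDF --- i.e., the ``first-order expansion'' alternative you mention, not a dominant-exponential replacement --- but since you explicitly defer to the same approximation as in the proof of Theorem 1, this is immaterial.
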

\begin{IEEEproof}
See Appendix \ref{theorem2}.
\end{IEEEproof}
\subsubsection{Throughput}

The system throughput of the secondary network is given by:

\begin{equation}\label{eq:taus}
\nu_{\rm{s}} = \sum\limits_{{m\rm{ = 1}}}^{M} \left(1 - P_{{\rm{r}}_m}\right) \bar{R}_m,
\end{equation} 
where $P_{{\rm{r}}_m}$ is obtained from \eqref{eq:Prm2}.

\section{Sum Rate Maximization}\label{sec:sum}

In the previous section, the power allocation coefficients were kept fixed, which are not optimal. Thus, in this section, to further improve the performance of the system, we formulate the sum rate maximization problem to obtain the optimal power allocation coefficients at the ST$_{\rm{b}}$ and develop a solution method and algorithm.

\subsection{Problem Formulation}

Recalling \eqref {eq:Rmxk} and the condition $R_{\text{r}_v,x_m} \geq \bar{R}_m,\, \forall v > m,\, v=m+1, \ldots, M$, the achievable data rate to decode $x_m$ at the SR$_m$ is expressed as:
\begin{equation}\label{eq:x0}
R_{x_m} = \log_{2} \left(1 + \min\left( {\rm{SINR}}_{{\rm{r}}_m,x_m}, \ldots, {\rm{SINR}}_{{\rm{r}}_M,x_m} \right) \right).
\end{equation}

\begin{lemma}\label{lem:rate} From \eqref{eq:x0}, the achievable data rate to decode $x_m$ by the {\rm{SR}}$_m$ is rewritten as
\begin{equation}\label{eq:x01}
R_{x_m} = \log_{2} \left(1 + {\rm{SINR}}_{{\rm{r}}_m,x_m} \right).
\end{equation}
\end{lemma}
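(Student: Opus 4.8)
The plan is to show that the minimum appearing in \eqref{eq:x0} is always attained at the index $v=m$, i.e.\ that ${\rm SINR}_{{\rm r}_m,x_m}\le{\rm SINR}_{{\rm r}_v,x_m}$ for every $v=m+1,\ldots,M$; since $\log_{2}(1+\cdot)$ is increasing, \eqref{eq:x01} follows at once.

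First I would isolate the dependence of the SINR in \eqref{eq:Rmxk_SINR} on the receiver index $v$. The power allocation coefficient $\alpha_m$, the residual-interference coefficients $\{\alpha_i\}_{i=m+1}^{M}$, the factor $\rho$, the transmit SNR $\bar\gamma$, and the PT--ST$_{\rm b}$ gain $||\textbf{h}_{{{\rm{p}}\rm{s}}_{\rm{b}}}||^{2}$ are identical for every receiver, so the only quantity that changes with $v$ is the ST$_{\rm b}$--SR$_v$ channel gain $|h_{\text{s}_{\rm{b}}\text{r}_v}|^2$. Setting $a=\alpha_m\rho\bar\gamma||\textbf{h}_{{{\rm{p}}\rm{s}}_{\rm{b}}}||^{2}>0$ and $b=\big(\sum_{i=m+1}^{M}\alpha_i\big)\rho\bar\gamma||\textbf{h}_{{{\rm{p}}\rm{s}}_{\rm{b}}}||^{2}\ge 0$ (the sum being empty, hence $b=0$, when $m=M$), and dividing numerator and denominator by $|h_{\text{s}_{\rm{b}}\text{r}_v}|^2$, I obtain
\[
{\rm SINR}_{{\rm r}_v,x_m}=\frac{a}{\,b+1/|h_{\text{s}_{\rm{b}}\text{r}_v}|^2\,},
\]
which is strictly increasing in $|h_{\text{s}_{\rm{b}}\text{r}_v}|^2$.

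Next I would invoke the channel-ordering assumption from the signal model, $|h_{{\rm{s}}_{\rm{b}}{{\rm{p}}}}|^2\le|h_{{\rm{s}}_{\rm{b}}{{\rm{r}}_{1}}}|^2\le\cdots\le|h_{{\rm{s}}_{\rm{b}}{{\rm{r}}_{M}}}|^2$, which gives $|h_{{\rm{s}}_{\rm{b}}{{\rm{r}}_{m}}}|^2\le|h_{\text{s}_{\rm{b}}\text{r}_v}|^2$ for all $v\ge m$. Together with the monotonicity just established, this yields ${\rm SINR}_{{\rm r}_m,x_m}\le{\rm SINR}_{{\rm r}_v,x_m}$ for every $v=m+1,\ldots,M$, so $\min\big({\rm SINR}_{{\rm r}_m,x_m},\ldots,{\rm SINR}_{{\rm r}_M,x_m}\big)={\rm SINR}_{{\rm r}_m,x_m}$; substituting this into \eqref{eq:x0} produces \eqref{eq:x01}.

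There is no genuine analytical difficulty here --- the argument is essentially a one-line monotonicity observation. The only point that deserves a sentence of justification is \emph{why} the same coefficient $\alpha_m$ and the same residual interference set $\{x_i\}_{i>m}$ appear in ${\rm SINR}_{{\rm r}_v,x_m}$ for all $v\ge m$: this is exactly the structure imposed by the NOMA/SIC protocol described around \eqref{eq:SnPR}--\eqref{eq:Rmxk_SINR}, since the power split is performed once at ST$_{\rm b}$ and every SR$_v$ with $v\ge m$ decodes $x_m$ (after cancelling $x_0,\ldots,x_{m-1}$) before decoding its own stream. Once that is noted, the lemma follows.
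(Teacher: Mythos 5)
Your proof is correct and rests on exactly the same idea as the paper's: the SINR in \eqref{eq:Rmxk_SINR} is increasing in the channel gain $|h_{\text{s}_{\rm{b}}\text{r}_v}|^2$, so the ordering $|h_{{\rm{s}}_{\rm{b}}{{\rm{p}}}}|^2 \leq |h_{{\rm{s}}_{\rm{b}}{{\rm{r}}_{1}}}|^2 \leq \ldots \leq |h_{{\rm{s}}_{\rm{b}}{{\rm{r}}_{M}}}|^2$ forces the minimum in \eqref{eq:x0} to be attained at $v=m$. The only cosmetic difference is that you establish the monotonicity directly by writing the SINR as $a/(b+1/|h_{\text{s}_{\rm{b}}\text{r}_v}|^2)$, whereas the paper derives the contrapositive by cross-multiplying and arguing by contradiction; the substance is identical.
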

\begin{IEEEproof}\label{eq:x02}
We prove this by contradiction. Let us first assume that ${\rm{SINR}}_{\text{r}_0,x_0} \geq {\rm{SINR}}_{{\rm{r}}_1,x_0}$. Then, we have
\begin{align}
{\rm{SINR}}_{\text{r}_0,x_0}& \geq {\rm{SINR}}_{{\rm{r}}_1,x_0} \nonumber \\
&\Leftrightarrow \frac{\alpha_{0} \rho \bar \gamma ||\textbf{h}_{{{\rm{p}}\rm{s}}_{\rm{b}}}||^{2} \gSbP}{\sum\limits_{{i} = 1}^{M} \alpha_{i} \rho \bar \gamma ||\textbf{h}_{{{\rm{p}}\rm{s}}_{\rm{b}}}||^{2} \gSbP + 1} \geq \frac{\alpha_{0} \rho \bar \gamma ||\textbf{h}_{{{\rm{p}}\rm{s}}_{\rm{b}}}||^{2} |h_{\rm{s_b} {\rm{r}_\textit{1}}}|^2}{\sum\limits_{{i} = 1}^{M} \alpha_{i} \rho \bar \gamma ||\textbf{h}_{{{\rm{p}}\rm{s}}_{\rm{b}}}||^{2} |h_{\rm{s_b} {\rm{r}_\textit{1}}}|^2 + 1} \nonumber \\
&\Leftrightarrow \alpha_{0} \sum\limits_{{i} = 1}^{M} \alpha_{i} \left( \rho \bar \gamma ||\textbf{h}_{{{\rm{p}}\rm{s}}_{\rm{b}}}||^{2} \right)^2 \gSbP |h_{\rm{s_b} {\rm{r}_\textit{1}}}|^2 + \alpha_{0} \rho \bar \gamma ||\textbf{h}_{{{\rm{p}}\rm{s}}_{\rm{b}}}||^{2} \gSbP \nonumber \\
& \geq \alpha_{0} \sum\limits_{{i} = 1}^{M} \alpha_{i} \left( \rho \bar \gamma ||\textbf{h}_{{{\rm{p}}\rm{s}}_{\rm{b}}}||^{2} \right)^2 \gSbP |h_{\rm{s_b} {\rm{r}_\textit{1}}}|^2 + \alpha_{0} \rho \bar \gamma ||\textbf{h}_{{{\rm{p}}\rm{s}}_{\rm{b}}}||^{2} |h_{\rm{s_b} {\rm{r}_\textit{1}}}|^2 \nonumber \\
&\Leftrightarrow \gSbP \geq |h_{\rm{s_b} {\rm{r}_\textit{1}}}|^2.\nonumber
\end{align}

The above result violates the condition of channel gain order that we set earlier. Hence, ${\rm{SINR}}_{\text{r}_0,x_0} \leq {\rm{SINR}}_{{\rm{r}}_1,x_0}$. Following similar derivation steps as previously mentioned, we have ${\rm{SINR}}_{\text{r}_0,x_0} \leq {\rm{SINR}}_{\Rm,x_0}, \forall m$. Thus, $\min\left( {\rm{SINR}}_{\text{r}_0,x_0}, {\rm{SINR}}_{{\rm{r}}_\textit{1},x_0}, \ldots, {\rm{SINR}}_{{\rm{r}}_M,x_0} \right) = {\rm{SINR}}_{\text{r}_0,x_0}$. 

By generalizing the above result, the achievable data rate to decode the SR$_{m}$'s data $x_m$, $1 \leq m \leq M$, by all SR$_{m'}$, $m \leq m' \leq M$, is given by
\begin{align}\label{xm}
R_{x_m} &= \log_{2} \left(1 + \min ({\rm{SINR}}_{{\rm{r}}_m,x_m},\ldots,{\rm{SINR}}_{{\rm{r}}_M,x_m}) \right) \nonumber \\
& = \log_{2} \left(1 + {\rm{SINR}}_{{\rm{r}}_m,x_m} \right).
\end{align}
\end{IEEEproof} 

Using Lemma~\ref{lem:rate}, the sum rate maximization problem is formulated as follows:


\begin{subequations}\label{eq:max}
\begin{align}
\underset{\boldsymbol{\alpha}}{\text{maximize}} \label{eq:op1}
& \qquad R_{x_0}  + \sum\limits_{{m'\rm{ = 1}}}^{M-1} R_{x_{m'}} + R_{x_M} \\ \label{eq:op2}
\text{s.t.} 
&  \qquad R_{x_i} \geq \bar{R}_i, 0 \leq i \leq M, \\ \label{eq:op3}
&  \qquad\alpha_{0} \geq \ldots \geq \alpha_{i} \geq \ldots \geq \alpha_{M}, \\ \label{eq:op4}
 & \qquad \sum\limits_{{i\rm{ = 0}}}^{M} \alpha_i \leq 1,
\end{align}
\end{subequations}
where $\boldsymbol{\alpha} = \left[\alpha_{0}, \alpha_{1},\ldots, \alpha_{M}\right]$ denotes the power allocation coefficients vector. The constraint \eqref{eq:op2} ensures that the quality-of-service (QoS) requirement $\bar{R}_i$ of each link is guaranteed, the constraint in \eqref{eq:op3} represents the necessary conditions related to fairness among the users, and the constraint in \eqref{eq:op4} puts limit on the total transmit power at the ST$_{\rm{b}}$.

We can observe that the original problem \eqref{eq:max} is non-convex because of the objective function, and it is difficult to solve it quickly to obtain a global solution. Instead, we aim to solve it for suboptimal solution with faster convergence. To this end, we invoke several useful steps, such as its equivalent transformation and approximation, as described in the next subsection, to solve \eqref{eq:max}. 

\subsection{Proposed Solution}

By noting that the logarithmic function is a monotonically non-decreasing function, \eqref{eq:max} is equivalently written as:
\begin{equation}\label{eq:max1}
\underset{\boldsymbol{\alpha}}{\text{maximize}} \Bigg\{\prod_{i=0}^M(1+\text{SINR}_{\text{r}_i,x_i})\mid \eqref{eq:op2}, \eqref{eq:op3},\,\text{and}\,\eqref{eq:op4}\Bigg\}.
\end{equation}

By introducing a new vector of slack variables $\mathbf{t}=[t_0, \ldots,t_M]$,  the problem \eqref{eq:max1} can be equivalently recast as:

\begin{subequations}\label{eq:max2}
\begin{align}
& \underset{\boldsymbol{\alpha},\mathbf{t}}{\text{maximize}} \label{eq:op12}
& & \prod_{i=0}^{M} t_i \\ \label{eq:op22}
& \text{s.t.} 
& & \text{SINR}_{r_0,x_0} \geq t_0 - 1,\\ \label{eq:op32}
& & & \text{SINR}_{r_{m'},x_{m'}} \geq t_{m'} - 1, \qquad 1 \leq m' \leq M-1, \\ \label{eq:op42}	
& & & \text{SINR}_{r_M,x_M} \geq t_M - 1, \\ \label{eq:op52}
& & & \alpha_{0} \rho \bar \gamma ||\textbf{h}_{{{\rm{p}}\rm{s}}_{\rm{b}}}||^{2} \gSbP \geq \left( 2^{\bar{R}_0} - 1 \right) \left(\sum\limits_{{i} = 1}^{M} \alpha_{i} \rho \bar \gamma ||\textbf{h}_{{{\rm{p}}\rm{s}}_{\rm{b}}}||^{2} \gSbP + 1\right), \\ \label{eq:op62}
& & & \alpha_{m'} \rho \bar \gamma ||\textbf{h}_{{{\rm{p}}\rm{s}}_{\rm{b}}}||^{2} |h_{\rm{s_b} {\rm{r}_{m'}}}|^2 \geq \left(2^{\bar{R}_{m'}} - 1\right) \left( \sum\limits_{{i} = m'+1}^{M} \alpha_{i} \rho \bar \gamma ||\textbf{h}_{{{\rm{p}}\rm{s}}_{\rm{b}}}||^{2} |h_{\rm{s_b} {\rm{r}_{m'}}}|^2 + 1  \right), 1 \leq m' \leq M-1, \\ \label{eq:op72}
& & & \alpha_{M} \rho \bar \gamma ||\textbf{h}_{{{\rm{p}}\rm{s}}_{\rm{b}}}||^{2} |h_{\rm{s_b} {\rm{r}_\textit{M}}}|^2 \geq \left(2^{\bar{R}_M} - 1\right), \\ \label{eq:op82}
& & & \eqref{eq:op3} \ \& \ \eqref{eq:op4}.
\end{align}
\end{subequations}

By noting that the constraints \eqref{eq:op22}-\eqref{eq:op42} are active at optimality, then \eqref{eq:max2} is the equivalent formulation of \eqref{eq:max1}. Further, note that the objective function \eqref{eq:op12} is the product of optimization variables $t_i, \forall i$, and hence, admits a second-order cone (SOC) representation. However, \eqref{eq:max2} is still intractable because of the non-convexity involved in constraints \eqref{eq:op22} and \eqref{eq:op32}. Next, we consider the constraints \eqref{eq:op22} and \eqref{eq:op32}, and reformulate them as follows:
\begin{subnumcases}
{\label{eq:op22r} \eqref{eq:op22} \Leftrightarrow} 
\sum\limits_{{i} = 1}^{M} \alpha_{i} \rho \bar \gamma ||\textbf{h}_{{{\rm{p}}\rm{s}}_{\rm{b}}}||^{2} \gSbP + 1 \leq z_{0},\IEEEyessubnumber\label{eq:op22rc}\\
\alpha_{0} \rho \bar \gamma ||\textbf{h}_{{{\rm{p}}\rm{s}}_{\rm{b}}}||^{2} \gSbP \geq z_{0} t_0 - z_{0},\label{eq:op22re}
\end{subnumcases}
\begin{subnumcases}
{\label{eq:op32r} \eqref{eq:op32} \Leftrightarrow} 
\sum\limits_{{i} = m'+1}^{M} \alpha_{i} \rho \bar \gamma ||\textbf{h}_{{{\rm{p}}\rm{s}}_{\rm{b}}}||^{2} |h_{\rm{s_b} {\rm{r}_{m'}}}|^2 + 1 \leq z_{m'},\IEEEyessubnumber\label{eq:op32rc}\\
\alpha_{m'} \rho \bar \gamma ||\textbf{h}_{{{\rm{p}}\rm{s}}_{\rm{b}}}||^{2} |h_{\rm{s_b} {\rm{r}_{m'}}}|^2 \geq z_{m'} t_{m'} - z_{m'},\label{eq:op32re}
\end{subnumcases}
where $1 \leq m' \leq M-1$ and $\mathbf{z} = \left[z_{0}, z_{1},\ldots, z_{M-1}\right]$ represent newly introduced variables. After replacing \eqref{eq:op22} and \eqref{eq:op32} with \eqref{eq:op22rc} and \eqref{eq:op22re}, and \eqref{eq:op32rc} and \eqref{eq:op32re}, respectively, we get an equivalent formulation of \eqref{eq:max2}. However, this is still non-convex because of \eqref{eq:op22re} and \eqref{eq:op32re}. To this end, we approximate them by using the first-order Taylor series expansion. Firstly, we consider \eqref{eq:op22re} and rewrite the multiplicative terms $z_{0} t_0$ in the form of the difference of convex (d.c.) functions as follows:
\begin{align}
z_{0} t_0 = \frac{1}{4} \left[ \left(z_{0} + t_0\right)^2 - \left(z_{0} - t_0\right)^2 \right].
\end{align} 

Then, approximating $(z_0-t_0)^2$ by using the first-order Taylor series around the point $(z^{(\tau)}_{0},t^{(\tau)}_0)$, which is obtained at the $\tau$th iteration, \eqref{eq:op22re} can be replaced with the following convex constraint:
\begin{align}\label{eq:1}
\alpha_{0} \rho \bar \gamma ||\textbf{h}_{{{\rm{p}}\rm{s}}_{\rm{b}}}||^{2} \gSbP &\geq 0.25 \left(z_{0} + t_0\right)^2 - z_{0} - 0.25 \bigg[ \left(z^{(\tau)}_{0} - t^{(\tau)}_0\right)^2 + 2 \left(z^{(\tau)}_{0} - t^{(\tau)}_0\right) \nonumber \\
&\times \left(z_{0} - z^{(\tau)}_{0} - t_0 + t^{(\tau)}_0\right) \bigg].
\end{align}

Similarly, \eqref{eq:op32re} can be replaced with the following convex constraint:
\begin{align}\label{eq:2}
\alpha_{m'} \rho \bar \gamma ||\textbf{h}_{{{\rm{p}}\rm{s}}_{\rm{b}}}||^{2} |h_{\rm{s_b} {\rm{r}_{m'}}}|^2 &\geq 0.25 \left(z_{m'} + t_{m'}\right)^2 - z_{m'} - 0.25 \bigg[ \left(z^{(\tau)}_{m'} - t^{(\tau)}_{m'}\right)^2 + 2 \left(z^{(\tau)}_{m'} - t^{(\tau)}_{m'}\right) \nonumber \\ 
&\times \left(z_{m'} - z^{(\tau)}_{m'} - t_{m'} + t^{(\tau)}_{m'}\right) \bigg].
\end{align}

Finally, the convex problem to be solved at the $\tau$th iteration can be written as:
\begin{subequations}\label{eq:max4}
\begin{align}
& \underset{\boldsymbol{\alpha}, \mathbf{t}, \mathbf{z}}{\text{maximize}} \label{eq:op14}
& & \prod_{i=0}^{M} t_i \\ \label{eq:op24}
& \text{s.t.} 
& & \begin{cases}
\sum\limits_{{i} = 1}^{M} \alpha_{i} \rho \bar \gamma ||\textbf{h}_{{{\rm{p}}\rm{s}}_{\rm{b}}}||^{2} \gSbP + 1 \leq z_{0},\\
\eqref{eq:1},
\end{cases},\\ \label{eq:op34}
& & & \begin{cases}
\sum\limits_{{i} = m'+1}^{M} \alpha_{i} \rho \bar \gamma ||\textbf{h}_{{{\rm{p}}\rm{s}}_{\rm{b}}}||^{2} |h_{\rm{s_b} {\rm{r}_{m'}}}|^2 + 1 \leq z_{m'},\\
\eqref{eq:2},
\end{cases}, 1 \leq m' \leq M-1, \\ \label{eq:op44}	
& & & \alpha_{M} \rho \bar \gamma ||\textbf{h}_{{{\rm{p}}\rm{s}}_{\rm{b}}}||^{2} |h_{\rm{s_b} {\rm{r}_\textit{M}}}|^2 \geq t_M - 1, \\ \label{eq:op54}
& & & \eqref{eq:op52}-\eqref{eq:op82}.
\end{align}
\end{subequations}

After solving \eqref{eq:max4}, we update the involved optimization variables and repeat the procedure until convergence. The proposed algorithm is summarized in Algorithm \ref{alg_IterativeAlgorithm}.

\begin{algorithm}[!h]
	\begin{algorithmic}[1]
		\protect\caption{{\color{black}Proposed iterative algorithm to solve \eqref{eq:max}}}
		\label{alg_IterativeAlgorithm}
		\global\long\def\algorithmicrequire{\textbf{Initialization:}}
		\REQUIRE  Set $\tau=0$ and generate an initial feasible point $(t^{(0)}_i, z^{(0)}_{m'}), 0 \leq i \leq M, 0 \leq m' \leq M-1$.
		\REPEAT
		\STATE Solve the convex program \eqref{eq:max4} to obtain the optimal solution: $(t^{(\tau),\star}_i, z^{(\tau),\star}_{m'}), 0 \leq i \leq M, 0 \leq m' \leq M-1$.
		\STATE Update $(t^{(\tau+1)}_i, z^{(\tau+1)}_{m'}) := (t^{(\tau),\star}_i, z^{(\tau),\star}_{m'}), 0 \leq i \leq M, 0 \leq m' \leq M-1.$
		\STATE Set $\tau=\tau+1.$
		\UNTIL Convergence\\
	\end{algorithmic} 
\end{algorithm}

\subsection{Convergence and Complexity Analysis}
	
The proposed algorithm  begins with a random initial feasible point for the updated variables $(t^{(0)}_i, z^{(0)}_{m'}), 0 \leq i \leq M, 0 \leq m' \leq M-1$. In each iteration, we solve the convex program \eqref{eq:max4} to produce the next feasible point $(t^{(\tau+1)}_i, z^{(\tau+1)}_{m'}), 0 \leq i \leq M, 0 \leq m' \leq M-1$. This procedure is successively repeated until convergence, which is stated in the following proposition.	
  	
\begin{proposition}\label{pro:1} Initialized from a feasible point $(t^{(0)}_i, z^{(0)}_{m'}), 0 \leq i \leq M, 0 \leq m' \leq M-1$, Algorithm 1 produces a sequence $(t^{(\tau)}_i, z^{(\tau)}_{m'}), 0 \leq i \leq M, 0 \leq m' \leq M-1,$ of improved solutions to problem \eqref{eq:max4}, which satisfy the Karush-Kuhn-Tucker (KKT) conditions. The sequence  $\Bigl\{\prod_{i=0}^{M} t^{\left(\tau\right)}_i \Bigr\}_{\tau=1}^{\infty}$ is monotonically increasing and converges after a finite number of iterations for a given error tolerance $\epsilon >0$.
\end{proposition}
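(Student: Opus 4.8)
The plan is to follow the standard successive convex approximation (SCA) / inner-approximation argument in four stages. First I would establish that each linearized constraint \eqref{eq:1} and \eqref{eq:2} is a valid \emph{conservative} surrogate of its parent constraint \eqref{eq:op22re} and \eqref{eq:op32re}. The key observation is that $(z-t)^2$ is convex, so its first-order Taylor expansion about the current point $(z^{(\tau)},t^{(\tau)})$ is a global underestimator; replacing $0.25(z-t)^2$ in the d.c. decomposition by this underestimator only enlarges the right-hand side, hence tightens the inequality. Consequently the feasible set of the convex program \eqref{eq:max4} is contained in that of \eqref{eq:max2} (and therefore of \eqref{eq:max}), so every iterate produced by Algorithm \ref{alg_IterativeAlgorithm} is feasible for the original problem. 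Moreover the surrogate coincides with the true constraint function at the expansion point and shares its gradient there; these two facts, together with conservatism, are precisely the properties needed for the KKT claim below.

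Second, I would prove monotonicity. Because the surrogate constraints are exact at $(\mathbf{t}^{(\tau)},\mathbf{z}^{(\tau)})$, the previous optimal iterate together with its $\boldsymbol{\alpha}^{(\tau)}$ remains feasible for the subproblem \eqref{eq:max4} solved at iteration $\tau+1$; since that subproblem maximizes $\prod_{i=0}^{M} t_i$, its optimal value is at least $\prod_{i=0}^{M} t_i^{(\tau)}$. Hence $\bigl\{\prod_{i=0}^{M} t_i^{(\tau)}\bigr\}$ is nondecreasing. It is also bounded above: the total-power constraint \eqref{eq:op4} together with \eqref{eq:op52}--\eqref{eq:op72} bounds each $\mathrm{SINR}_{r_i,x_i}$, hence each $t_i$, hence the product. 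A monotone bounded sequence converges, which gives convergence of the objective; note also that the objective $\prod_i t_i$ admits an SOC (geometric-mean) representation, so each \eqref{eq:max4} is a genuine convex program solved to global optimality, and standard SCA fixed-point arguments then yield convergence of the iterates themselves.

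Third, for the KKT claim I would invoke the convergence theorem for inner-approximation methods: since at the expansion point each surrogate (i) equals the true constraint function, (ii) has the same gradient, and (iii) is conservative, any limit point of $(\mathbf{t}^{(\tau)},\mathbf{z}^{(\tau)},\boldsymbol{\alpha}^{(\tau)})$ is a KKT point of \eqref{eq:max2}; by the equivalences already established in the paper (activeness of \eqref{eq:op22}--\eqref{eq:op42} at optimality and monotonicity of $\log$), it is a KKT point of the original \eqref{eq:max}. This step needs a constraint qualification at the limit point, which I would argue from the structure of the constraints (a Slater-type point exists whenever the QoS targets $\bar R_i$ are feasible). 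Finally, finite termination is immediate: a convergent sequence is Cauchy, so for any $\epsilon>0$ there is a finite $\tau$ with $\prod_{i=0}^{M} t_i^{(\tau+1)}-\prod_{i=0}^{M} t_i^{(\tau)}<\epsilon$, which is exactly the stopping rule in Algorithm \ref{alg_IterativeAlgorithm}.

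I expect the main obstacle to be the third stage: making precise that the surrogate functions satisfy the three inner-approximation conditions in the form required by the SCA convergence theorem and that a constraint qualification holds at the limit, so the theorem applies verbatim. The monotonicity and boundedness stages are routine once the conservatism of \eqref{eq:1}--\eqref{eq:2} has been shown.
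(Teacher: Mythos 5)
Your proposal is correct and follows essentially the same route as the paper's proof: conservatism and exactness of the Taylor surrogates at the expansion point, feasibility of the previous iterate for the next subproblem giving monotone increase of $\prod_{i=0}^{M} t_i$, and an appeal to the Marks--Wright inner-approximation convergence theorem for the KKT claim. You supply some details the paper leaves implicit (boundedness of the objective sequence and the constraint-qualification caveat), but the underlying argument is identical.
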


\begin{proof}
See Appendix \ref{proposition 1}.
\end{proof}

An SOC programming (SOCP) is solved in each iteration of the procedure illustrated in Algorithm 1. Hence, the worst case of the complexity is regulated by the SOCP in each run. To assess the complexity estimate, the worst case complexity of the SOCP in \eqref{eq:max4} is estimated. As shown in \cite{1op}, for general interior-point methods, the complexity of the SOCP relies on the number of constraints, variables, and the dimension of each SOC constraint. The total number of constraints in \eqref{eq:max4} is $3 M + 4 + a$, where $a$ is a non-negative integer constant and denotes the SOC constraints with different $M$. This is because the objective function in \eqref{eq:max4} represents the equivalent SOC of the geometric mean \cite{1op}. Hence, the number of iterations required to decrease the duality gap to a small constant is upper bounded by $\mathcal{O} \left(\sqrt{3 M + 4 + a}\right)$ \cite{1op}. The per iteration worst case complexity estimate of the interior-point method is $\mathcal{O}\left(\left(3 M + 2 + a\right)^2 \left(3M\right)\right)$, where $3 M + 2 + a$ and $3M$ represent the number of optimization variables and the dimension of the SOC constraints in \eqref{eq:max4}, respectively.

\section{Numerical Results} \label{sec}

In this section, simulation results are presented to verify the findings presented in Sections \ref{sec:per} and \ref{sec:sum}. Without loss of generality, we set $M = 2$, $\lambda _{\rm{ps}} = 5$, and $\lambda _{\rm{sp}} = \lambda _{\rm{sr}} = 50$.{\footnote{We consider the network scenario where STs are located near the PR and SRs, but they are far from PT.}} The target data rates of primary and secondary signals are $\bar{R}_0 = \bar{R}_1 = \bar{R}_2 = 0.5$ bps/Hz. The residual SI channel $\hSbSb$ is modeled as described in \cite{1,12}, and its variance is set to $|\hSbSb|^2 = I_{\rm{SI}} = \sqrt{\zeta}$, where $\zeta=$ -1 dB \cite{1}. The energy conversion efficiency is set to be $\eta = 0.75$, $\xi = 1$, and $\psi = 0.75$ \cite{9}. For the results corresponding to the outage probability analysis, the power allocation coefficients are set as $\alpha_0 = 0.6$, $\alpha_1 = 0.3$, and $\alpha_2 = 0.1$.
	
\subsection{Outage Probability}
	
\begin{figure}[!h]
\begin{center}
\epsfxsize=9cm \epsfbox{./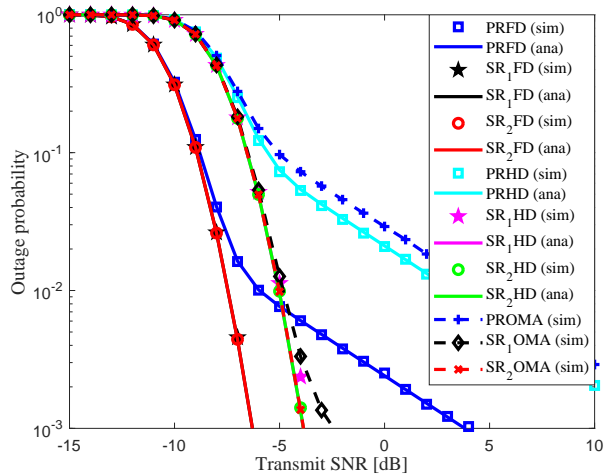} \caption{Outage probability of primary and secondary networks operating in FD and HD modes, where $\beta = 0.8$, $N = 5$, and $K = 3$. ana: analytical results; sim: simulation results.} \label{fig:3}
\end{center}
\end{figure}
	
In Fig. \ref{fig:3}, we compare the outage probability performance between primary and secondary networks operating in FD and HD modes. We consider the conventional OMA-TDMA scheme as a benchmark. In OMA-TDMA scheme, the first $\kappa$ fraction of the block time $T$, where $\kappa$ denotes the time allocation parameter, is used for the transmission from the PT to the ST$_{\rm{b}}$, whereas the remaining (1-$\kappa$) fraction of $T$ is equally divided into ($M$+1) time slots for the transmission from the ST$_{\rm{b}}$ to the PR and $M$ SRs. In this figure, we set the block time allocation parameter $\kappa$ to 1/2. Further, PRFD, SR$_{1}$FD, and SR$_{2}$FD curves denote the outage performance of the PR, the SR$_{1}$, and the SR$_{2}$ when the $\Sb$ operates in FD mode, respectively. PRHD, SR$_{1}$HD, and SR$_{2}$HD curves represent the outage performance of the PR, the SR$_{1}$, and the SR$_{2}$ when the $\Sb$ operates in HD mode, respectively, while PROMA, SR$_{1}$OMA, and SR$_{2}$OMA curves are the outage performance of the PR, the SR$_{1}$, and the SR$_{2}$ in OMA-TDMA scheme, respectively. As can be seen from Fig. \ref{fig:3}, for both FD and HD cases, the outage performance of the primary network is worse than that of the secondary network\footnote{Note that, if the PR is grouped with SRs whose channel gains are weaker, the outage performance of the PR and SRs will swap.}. This can be explained that even though the PR is assigned larger power allocation coefficient compared to the SR$_{1}$ and the SR$_{2}$, the quality of ST$_{\rm{b}}$-PR channel is worst compared to that of ST$_{\rm{b}}$-SR$_{1}$ and ST$_{\rm{b}}$-SR$_{2}$ channels and since the PR suffers interference from SRs when it decodes its information, which lets the achievable data rate of the PR in \eqref{eq:R_P} be smaller than that of SRs in \eqref{eq:Rmxk}. Hence, outage probability obtained at the PR is higher than that at SRs. In addition, we can see that outage probability of both networks in FD case is much smaller than that in HD mode, confirming the benefit of FD compared with HD technique. This makes sense since with FD, the ST$_{\rm{b}}$ can simultaneously receive the signal from the PT and forward its signal to the PR and SRs, whereas with the HD technique the ST$_{\rm{b}}$ has to separate the time used for receiving the PT's signal and the time for its transmission to the PR and SRs. Furthermore, the performance of our proposed FD-NOMA scheme is obviously better than OMA-TDMA scheme for both primary and secondary networks since our proposed scheme enhances the use of primary spectrum resource, i.e., the PR and SRs are concurrently served in the same resource block, while the OMA-TDMA scheme requires separated resource blocks. From Fig. \ref{fig:3}, we can see that the tight closed-form approximate expression curves, which are shown in \eqref{eq:PP1} and \eqref{eq:Prm2}, match well with simulation results. Hence, this verifies the correctness of the mathematical analysis.

\begin{figure}[!h]
	\begin{minipage}{0.46\linewidth}
		\begin{center}
			\epsfxsize=8cm \epsfbox{./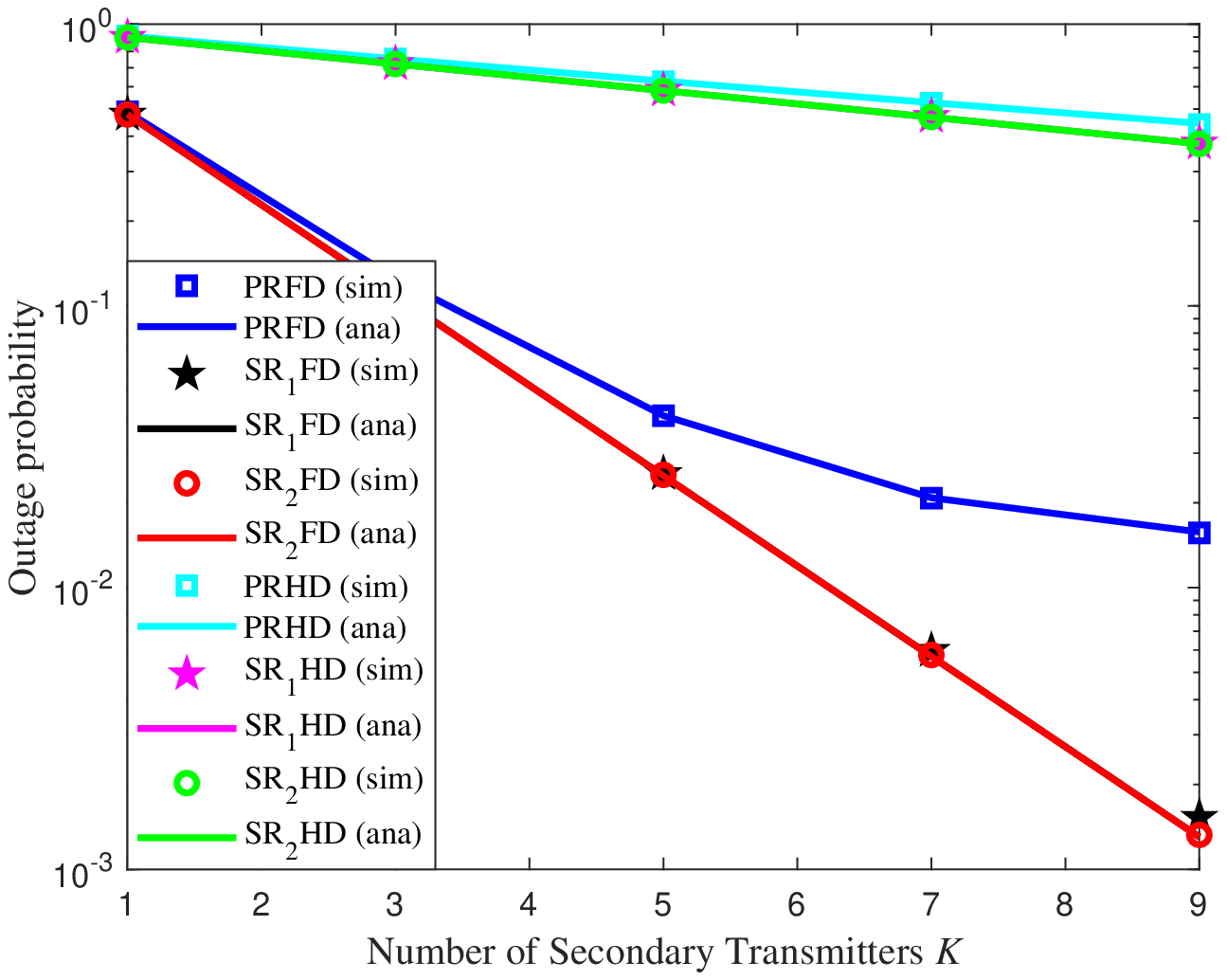} \caption{Impact of the number of STs $K$ on the outage performance of primary and secondary networks operating in FD and HD modes, where $\beta = 0.8$, $N = 5$, and $\bar \gamma = -9$ dB.} \label{fig:4}
		\end{center}
	\end{minipage}
	\hfill
	\begin{minipage}{0.46\linewidth}
		\begin{center}
			\epsfxsize=8cm \epsfbox{./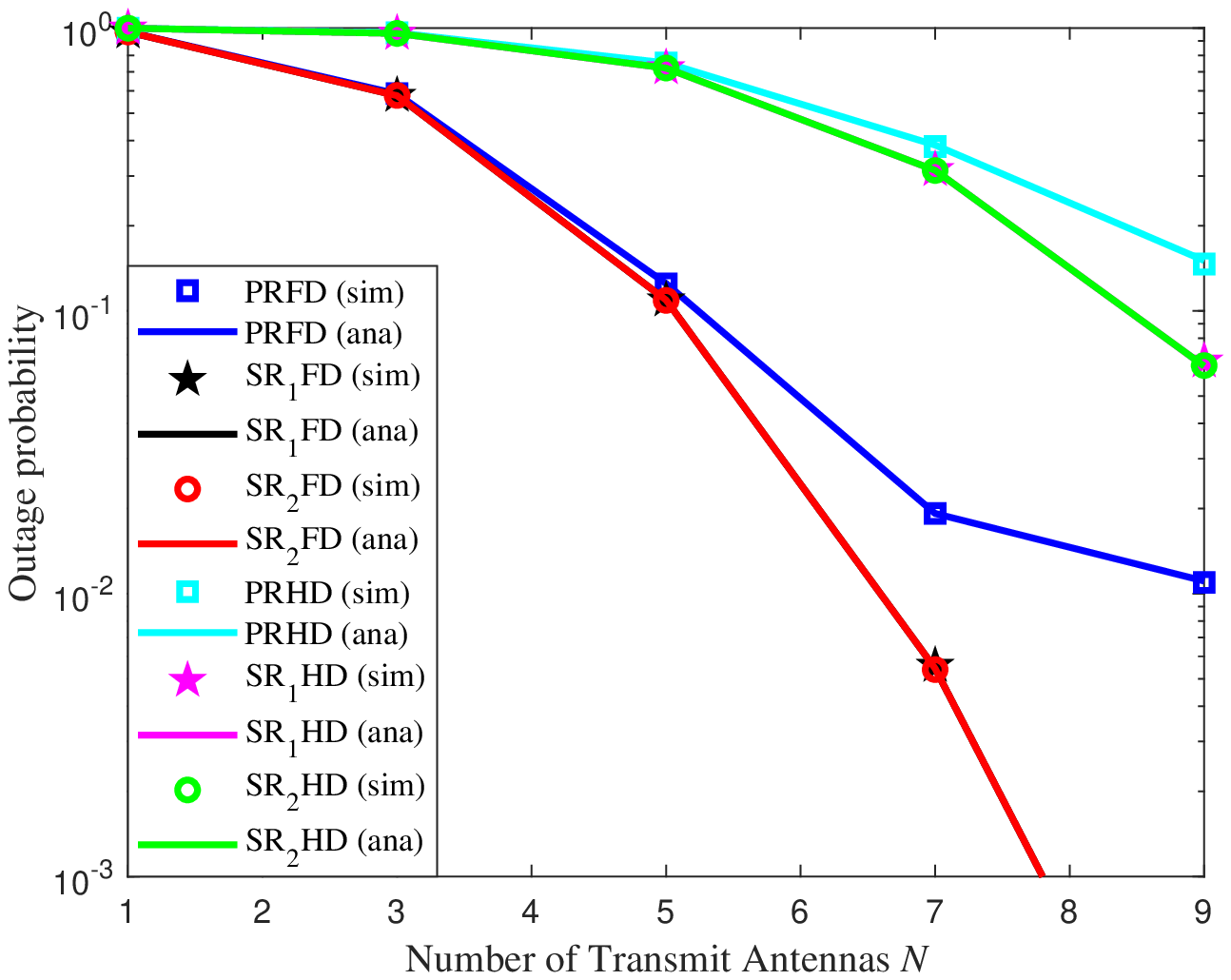} \caption{Effect of the number of transmit antennas $N$ on the outage performance of primary and secondary networks operating in FD and HD modes, where $\beta = 0.8$, $K = 3$, and $\bar \gamma = -9$ dB.} \label{fig:5}
		\end{center}
	\end{minipage}
\end{figure}
	
Fig. \ref{fig:4} shows the impact of the number of STs $K$ on the outage probability of primary and secondary networks operating in FD and HD modes. We observe that when $K$ increases, the outage performance of both primary and secondary networks in FD and HD modes is greatly improved. This is because the growth in the number of STs increases the probability of choosing the optimal ST$_{\rm{b}}$.  
	
Next, the effect of the number of transmit antennas $N$ on the outage probability of primary and secondary networks operating in FD and HD modes is shown in Fig. \ref{fig:5}. We can see that for both FD and HD cases, the larger the number of transmit antennas at the PT, the better the outage performance can be obtained. The reason is that a larger number of antennas provides higher spatial diversity, which in turn leads to improved signal reception quality at the ST$_{\rm{b}}$. 

\begin{figure}[!h]
	\begin{minipage}{0.46\linewidth}
		\begin{center}
			\epsfxsize=8cm \epsfbox{./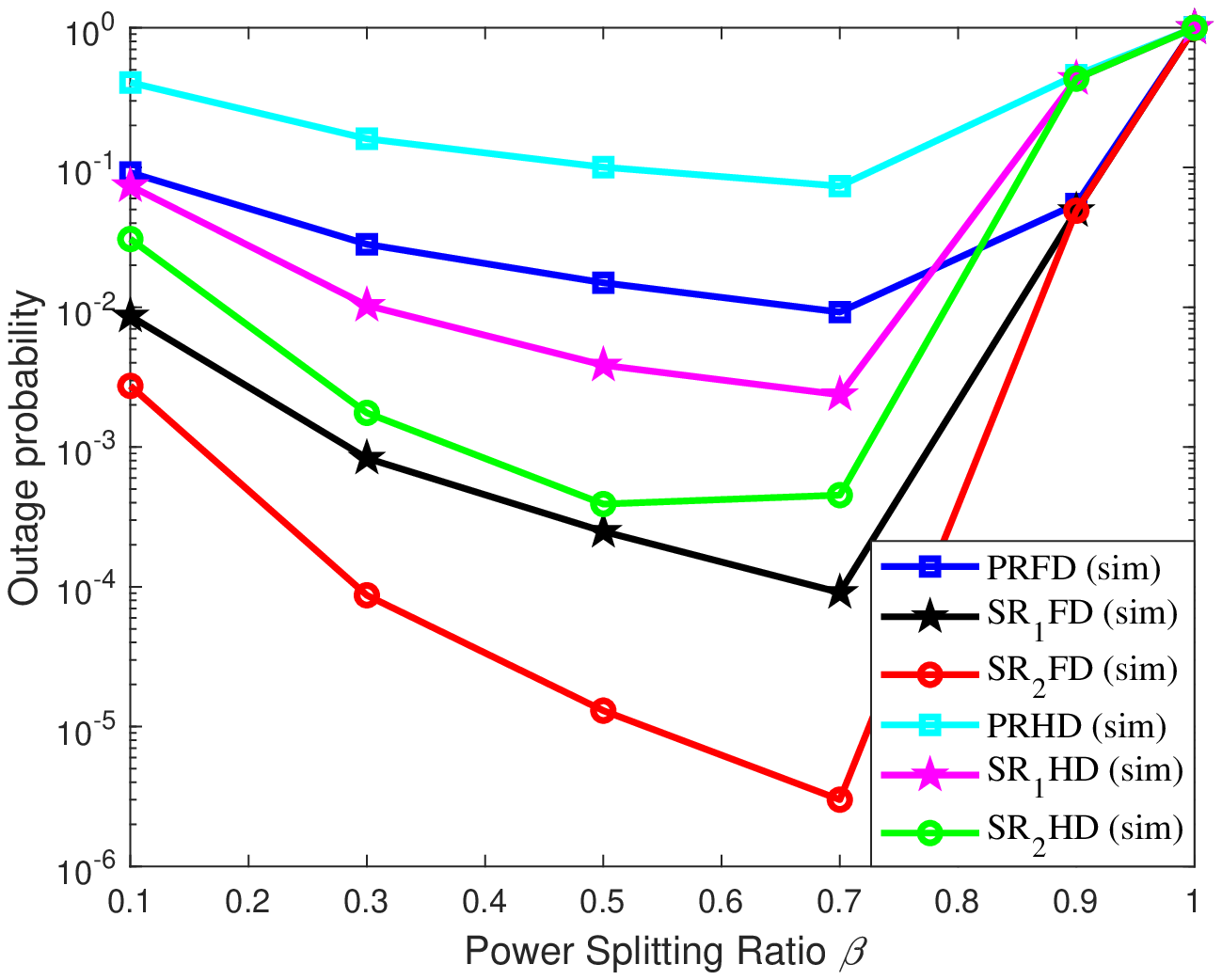} \caption{Impact of PS $\beta$ on the outage performance of primary and secondary networks operating in FD and HD modes, where $N = 5$, $K = 3$, and $\bar \gamma = -5$ dB.} \label{fig:6}
		\end{center}
	\end{minipage}
	\hfill
	\begin{minipage}{0.46\linewidth}
		\begin{center}
			\epsfxsize=8cm \epsfbox{./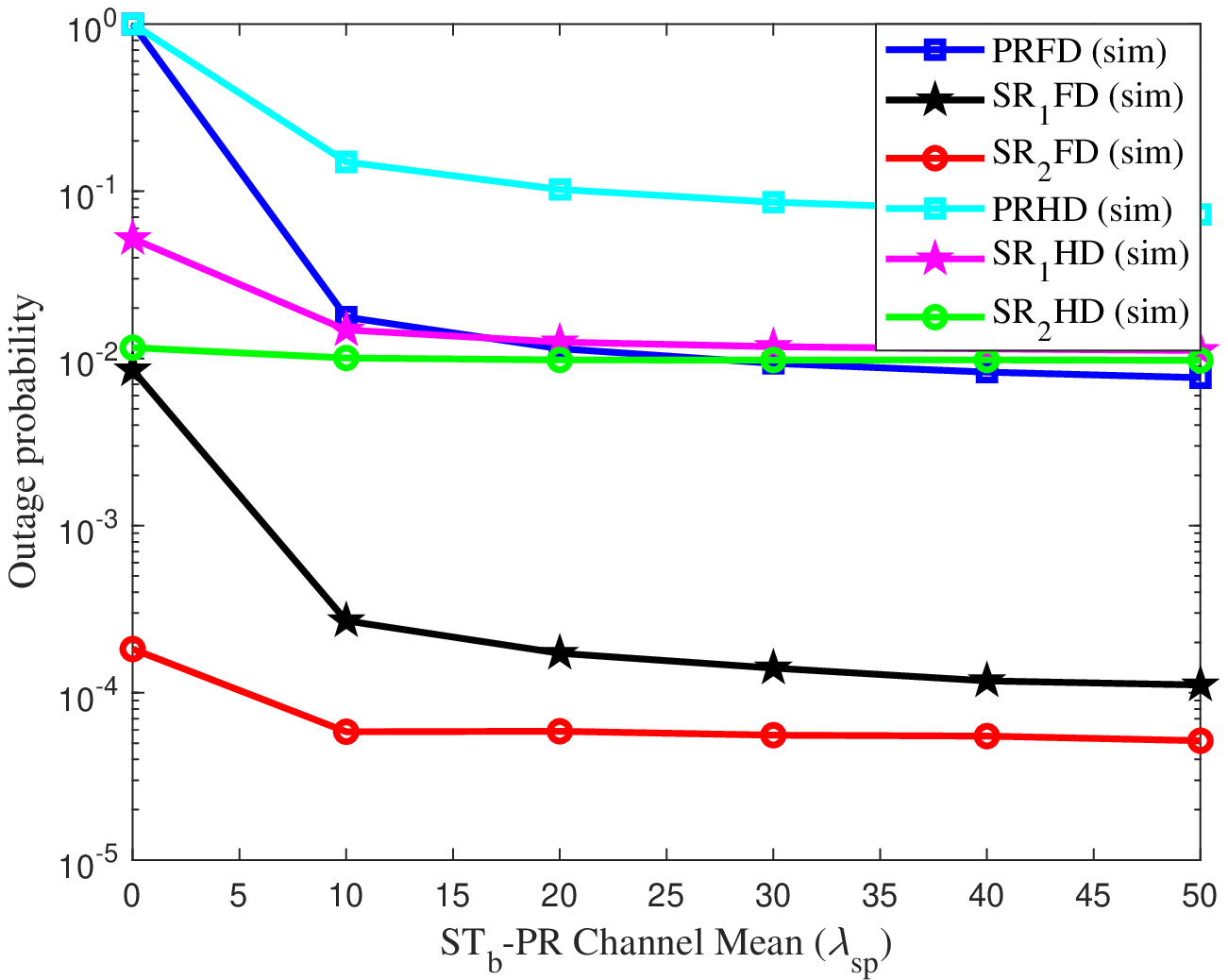} \caption{Impact of ST$_{\rm{b}}$-PR channel mean on the outage probability of primary and secondary networks operating in FD and HD modes, where $\beta = 0.8$, $N = 5$, $K = 3$, $\bar \gamma = -5$ dB, $\lambda_{\rm{ps}} = 5$, and $\lambda_{\rm{sr}} = 50$.} \label{fig:STPR}
		\end{center}
	\end{minipage}
\end{figure}

Fig. \ref{fig:6} depicts the impact of PS ratio $\beta$ on the outage performance of the FD and HD primary and secondary networks. As $\beta$ increases, the outage probability of both networks significantly reduces and attains the minimal values. The reason is that the growth in $\beta$ allows the ST$_{\rm{b}}$ to harvest more energy and in turn enhances the ST$_{\rm{b}}$'s transmit power and the achievable data rate at the ST$_{\rm{b}}$ as shown in \eqref{eq:PSn} and \eqref{eq:RSbxo}, respectively, which improves the information reception at the PR and SRs. Nevertheless, as $\beta$ continues increasing and reaches 1, the outage probability of both networks goes up and attains 1 since more power is given for energy harvesting and less power is left for the ST$_{\rm{b}}$ to decode $x_{0}$. Hence, outage occurs since the ST$_{\rm{b}}$ is unable to decode the PT's signal.
	
Fig. \ref{fig:STPR} illustrates the outage performance of primary and secondary networks in the FD and HD modes when the mean of ST$_{\rm{b}}$-PR ($\lambda_{\rm{sp}}$) channel varies, respectively. It is clear that the outage performance of primary and secondary networks greatly improves for better channel quality (higher channel mean) of ST$_{\rm{b}}$-PR link for both FD and HD modes. Note that similar results of the outage performance of primary and secondary networks in FD and HD modes are obtained with the increase in the mean of PT-ST$_{\rm{b}}$ ($\lambda_{\rm{ps}}$) or ST$_{\rm{b}}$-SRs ($\lambda_{\rm{sr}}$) channels as a result of the improvement in channel quality.
	
\subsection{System Throughput}
	
\begin{figure}[!h]
\begin{center}
\epsfxsize=9cm \epsfbox{./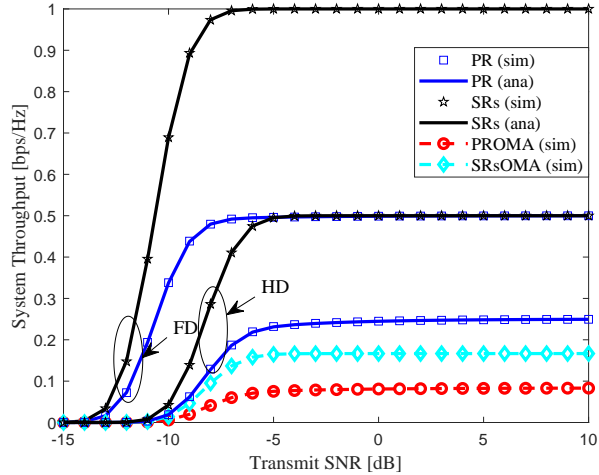} \caption{System throughput of primary and secondary networks operating in FD and HD modes, where $\beta = 0.8$, $N = 5$, and $K = 3$.} \label{fig:15}
\end{center}
\end{figure}  
	
The system throughput of primary and secondary networks in FD and HD modes is illustrated in Fig. \ref{fig:15}. It is clear that our proposed scheme obtains much higher system throughput than HD and the conventional OMA-TDMA schemes due to its lower outage probability. Besides, for all schemes, as the transmit SNR increases, the system throughput of primary and secondary networks goes up and reaches the system throughput floor, as presented in Remark 2.   

\subsection{Sum Rate}

\begin{figure}[!h]
	\begin{minipage}{0.46\linewidth}
		\begin{center}
			\epsfxsize=8cm \epsfbox{./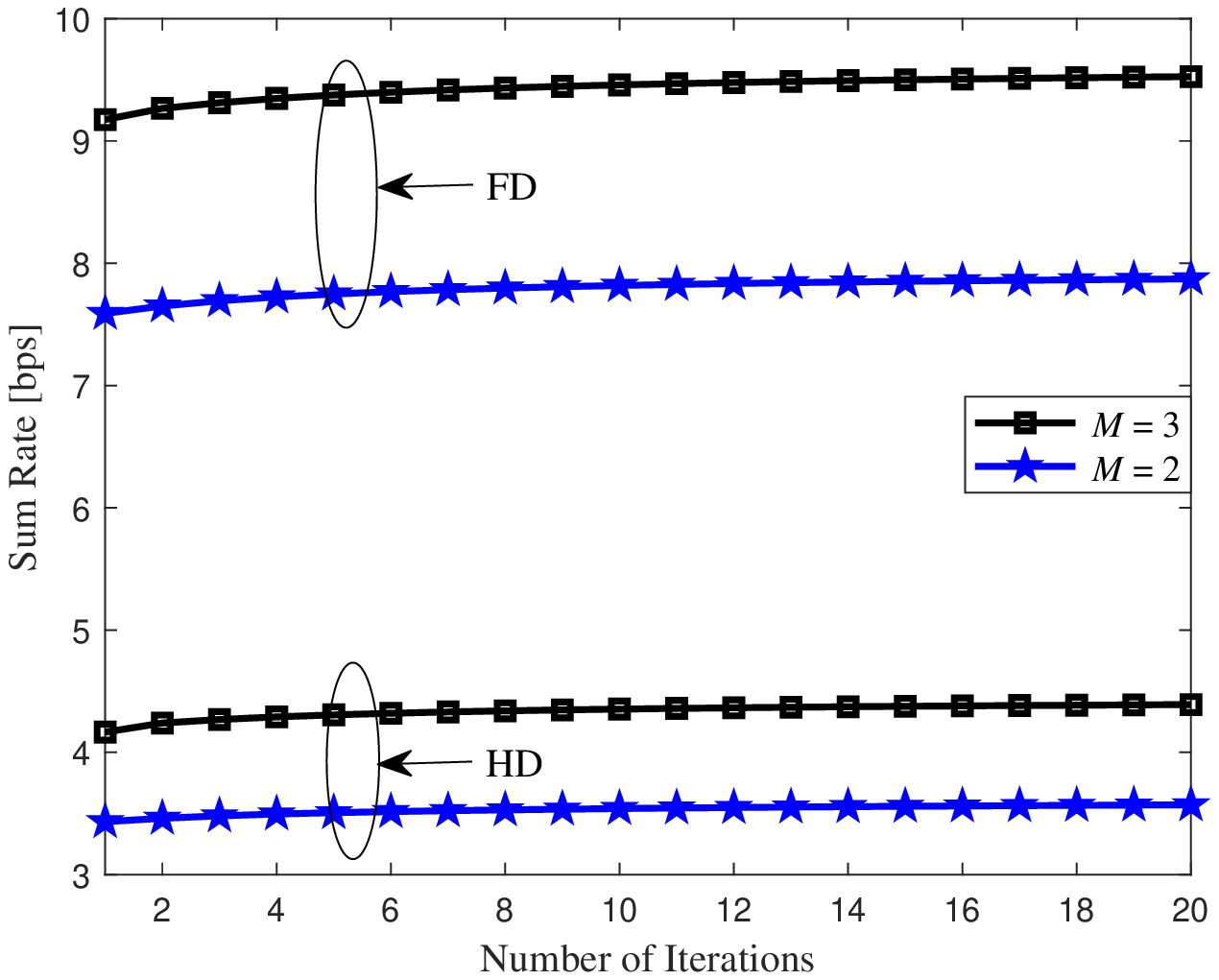} \caption{Convergence of the proposed sum rate maximization algorithm with different number of SRs $M$ in FD and HD modes, where $\beta = 0.8$, $N = 5$, $K = 3$, and $\bar \gamma = -5$ dB.} \label{fig:16}
		\end{center}
	\end{minipage}
	\hfill
	\begin{minipage}{0.46\linewidth}
		\begin{center}
			\epsfxsize=8cm \epsfbox{./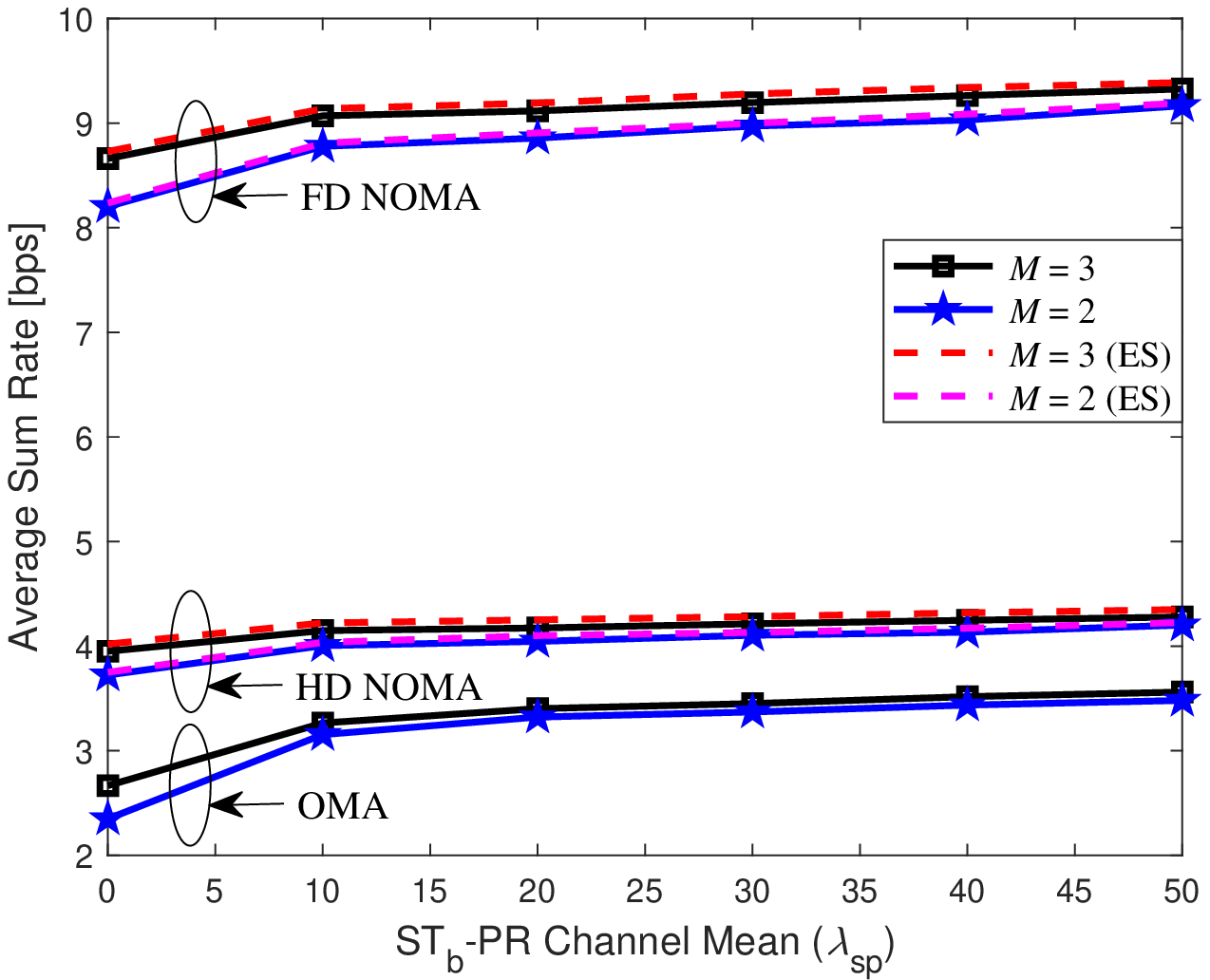} \caption{Effect of ST$_{\rm{b}}$-PR channel mean on the average sum rate of the proposed network in FD and HD modes, where $\beta = 0.8$, $N = 5$, $K = 3$, $\bar \gamma = -5$ dB, $\lambda_{\rm{ps}} = 5$, and $\lambda_{\rm{sr}} = 50$.} \label{fig:ASRSTPR}
		\end{center}
	\end{minipage}
\end{figure}

Fig. \ref{fig:16} illustrates the convergence behavior of our proposed sum rate maximization algorithm for the network when the ST$_{\rm{b}}$ operates in FD and HD modes, respectively. It can be observed that the proposed sum rate maximization algorithm only requires a small number of iterations to converge in both modes. Besides, increasing the number of SRs $M$ can significantly augment the sum rate of our proposed network in both modes.

Fig. \ref{fig:ASRSTPR} demonstrates the impact of ST$_{\rm{b}}$-PR ($\lambda_{\rm{sp}}$) channel mean on the average sum rate of the network when the FD and HD operation modes are used, respectively. It is obvious that the average sum rate greatly enhances with better channel quality (higher channel mean) of the ST$_{\rm{b}}$-PR link for both modes. Note that similar results of the average sum rate in FD and HD modes are obtained with the increase in the mean of PT-ST$_{\rm{b}}$ ($\lambda_{\rm{ps}}$) or ST$_{\rm{b}}$-SRs ($\lambda_{\rm{sr}}$) channels as a result of the enhancement in the channel quality. Furthermore, the average sum rate achieved by the proposed system using NOMA is higher than that in OMA-TDMA scheme. Moreover, we also compare the performance of Algorithm \ref{alg_IterativeAlgorithm} with the exhaustive search (ES), which is the method to find the globally optimal solution by searching all possible combinations of power allocation values. We can see that the average sum rate achieved by Algorithm \ref{alg_IterativeAlgorithm} is very close to the globally optimal solution found by the ES method. Therefore, our proposed iterative algorithm is able to achieve near-optimal performance with low complexity.

\begin{figure}[!h]
	\begin{center}
		\epsfxsize=9cm \epsfbox{./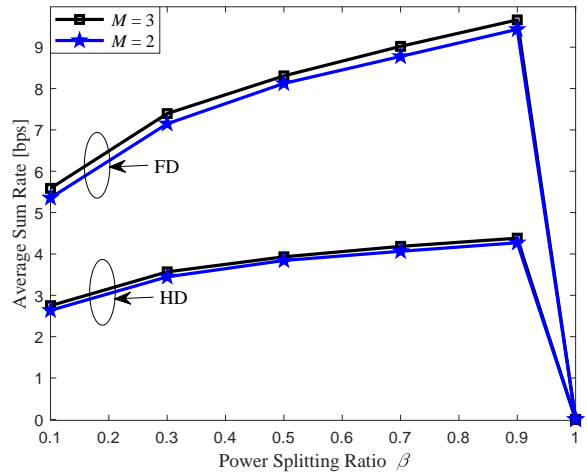} \caption{Impact of PS coefficient $\beta$ on the average sum rate of the proposed network in FD and HD modes, where $N = 5$, $K = 3$, and $\bar \gamma = -5$ dB.} \label{fig:17}
	\end{center}
\end{figure}  

Next, the effect of the PS coefficient $\beta$ on the average sum rate in the FD and HD cases is examined in Fig. \ref{fig:17}. It is seen that as $\beta$ increases, the average sum rate of the network operating in both FD and HD modes considerably grows for different number of SRs. The achieved results are reasonable since the rise in $\beta$ enables the ST$_{\rm{b}}$ to scavenge more energy and in turn improves the transmit power of the ST$_{\rm{b}}$, and hence, enhances the achievable data rates of the PR and SRs. However, when $\beta$ reaches 1, all RF power is harvested and no power is available for the PT to the ST$_{\rm{b}}$ information processing. Hence, the sum transmission rate of primary and secondary networks equals to 0 as the ST$_{\rm{b}}$ is unable to decode the PT's signal. 
	
\section{Conclusion}\label{sec:con}
In this paper, a novel NOMA assisted cooperative spectrum-sharing network has been proposed to encourage the collaboration between primary and secondary networks. In particular, the PT needs the help from the ST$_{\rm{b}}$ to relay its information and as an incentive, the ST$_{\rm{b}}$ has access to primary network's spectrum and uses it to transmit primary and secondary messages concurrently by using NOMA. The outage probability of primary and secondary networks has been derived in tight closed-form approximate expressions, and used to evaluate the system performance. Besides, the power allocation coefficients at ST$_{\rm{b}}$ have been optimized to maximize the sum transmission rate of the primary and secondary networks. Numerical results have showed the superior performance of the FD-NOMA system compared to the HD-NOMA and the conventional OMA-TDMA systems. Furthermore, it has been seen that several system design parameters, i.e., number of ST transmit antennas $N$, number of STs $K$, number of SRs $M$, the power allocation coefficients $\alpha$, power splitting ratio $\beta$, which have high impact on the system performance, should be carefully chosen in order to optimize the performance of the FD-SWIPT-NOMA cooperative spectrum-sharing network when applied in practice.
	
\appendices
\section{Proof of Lemma 1}\label{lemma1}

By using that $||\textbf{h}_{{\rm{p}}{\rm{s}}_{k}}||^2$ has the Gamma distribution and the PT-ST$_{k}$ links are independent, the CDF of $||\textbf{h}_{{\rm{p}}\rm{s_b}}||^2 = \mathop {\max }\limits_{k=1,2,\ldots,K} ||\textbf{h}_{{\rm{p}}{\rm{s}}_{k}}||^2$ can be derived as in \eqref{eq:Fhpsb}, where $C_{j}= 1$ for $j = 0$, $C_{j}= l$ for $j = 1$, and $C_{j}= \frac{1}{j} \sum\limits_{{p \rm{ = 1}}}^{q} \frac{pl - j + p}{p!} C_{j-p}$ for $2 \leq j \leq l(N - 1)$. $q = \min(j,N - 1)$. By differentiating \eqref{eq:Fhpsb}, the PDF of $||\textbf{h}_{{\rm{p}}\rm{s_b}}||^2$ can be obtained as in \eqref{eq:Phpsb}.
	
\section{Proof of Theorem 1}\label{theorem1}
	
By denoting the first term $\Pr \left( R_{{\Sbb},x_{0}} < \bar{R}_0 \right)$ and the second term $\Pr\left( R_{{\Sbb},x_{0}} \geq \bar{R}_0, R_{\PR} < \bar{R}_0 \right)$ in \eqref{eq:PP} as $\Phi_{1}$ and $\Phi_{2}$, respectively, and substituting \eqref{eq:RSbxo} and \eqref{eq:Fhpsb} into $\Phi_{1}$, $\Phi_{1}$ can be rewritten as:
\begin{align}\label{eq:Phi1}
\Phi_{1} &= \Pr \left[ \frac{(1-\beta) \bar \gamma ||\textbf{h}_{{{\rm{p}}\rm{s}}_{\rm{b}}}||^{2}}{(1-\beta) \rho \bar \gamma I_{\rm{SI}} ||\textbf{h}_{{{\rm{p}}\rm{s}}_{\rm{b}}}||^{2}+1} < \gamma_0 \right] \nonumber \\
&= \Pr \left[ ||\textbf{h}_{{{\rm{p}}\rm{s}}_{\rm{b}}}||^{2} < \frac{\gamma_0}{\bar \gamma \left( 1 - \beta \right) \left( 1 - \gamma_0 \rho I_{\rm{SI}} \right)} \right] \nonumber \\
&= \Pr \left( ||\textbf{h}_{{{\rm{p}}\rm{s}}_{\rm{b}}}||^{2} < \frac{\mu}{\bar \gamma} \right) \nonumber \\
&= \sum\limits_{{l\rm{ = 0}}}^{K} {\left( \begin{array}{l}
{K}\\
{l}
\end{array} \right)} (-1)^l \exp \left( -\frac{l \mu}{\lambda _{\rm{ps}} \bar \gamma} \right) \sum\limits_{{j\rm{ = 0}}}^{l(N - 1)} C_{j} \left(\frac{\mu}{{\lambda _{\rm{ps}}} \bar \gamma}\right)^{j},
\end{align}  
where $\gamma_0 < 1/(\rho I_{\rm{SI}})$; otherwise $\Phi_{1} = 1$, which leads to $\Phi_{2}$ in \eqref{eq:PP} equalling to 0 and hence $P_{\rm{p}} = 1$.
	
Next, we derive the term $\Phi_{2}$ in \eqref{eq:PP}. Substituting \eqref{eq:RSbxo} and \eqref{eq:R_P} into \eqref{eq:PP}, $\Phi_{2}$ can be rewritten as:
\begin{align}\label{eq:Phi2}
\Phi_2 &= \Pr \left[ ||\textbf{h}_{{{\rm{p}}\rm{s}}_{\rm{b}}}||^{2} \geq \frac{\gamma_0}{\bar \gamma \left( 1 - \beta \right) \left( 1 - \gamma_0 \rho I_{\rm{SI}} \right)}, ||\textbf{h}_{{{\rm{p}}\rm{s}}_{\rm{b}}}||^{2} \gSbP < \frac{\gamma_0}{\bar \gamma \rho \bigg( \alpha_0 - \sum\limits_{{i} = 1}^{M} \alpha_i \gamma_0 \bigg)}\right] \nonumber \\
& = \Pr \left( ||\textbf{h}_{{{\rm{p}}\rm{s}}_{\rm{b}}}||^{2} \geq \frac{\mu}{\bar \gamma} , ||\textbf{h}_{{{\rm{p}}\rm{s}}_{\rm{b}}}||^{2} \gSbP < \frac{\Theta_0}{\bar \gamma}\right),
\end{align}
where $\gamma_{0} < \min \left[ 1/(\rho I_{\rm{SI}}), \alpha_0/\left(\sum\limits_{{i} = 1}^{M} \alpha_i\right) \right]$. If $\gamma_0 > \alpha_0/\left(\sum\limits_{{i} = 1}^{M} \alpha_i\right)$, $P_{\rm{p}} = 1$.
	
With the use of the binomial theorem [28, Eq. (1.111)] and order statistics \cite{5}, the CDF of $\gSbP$ is shown as follows:
\begin{align}\label{eq:FgSbP}
F_{\gSbP}(x) &= \iota_{q} \sum\limits_{{c} = 0}^{Q-q} {\left( \begin{array}{cc}
{Q-q}\\
{c}
\end{array} \right)} \frac{(-1)^{c}}{q+c} \left[F_{|\bar h_{\rm{s}_{\rm{b}} {\rm{p}}}|^2}(x)\right]^{q+c} \nonumber \\
&= \iota_{q} \sum\limits_{{c} = 0}^{Q-q} {\left( \begin{array}{cc}
{Q-q}\\
{c}
\end{array} \right)} \frac{(-1)^{c}}{q+c} \sum\limits_{{n} = 0}^{q+c} {\left( \begin{array}{cc}
{q+c}\\
{n}
\end{array} \right)} (-1)^n \exp\left(-\frac{n x}{\lambda _{\rm{sp}}}\right),
\end{align}
where $|\bar h_{\rm{s}_{\rm{b}} {\rm{p}}}|^2$ is the unsorted channel gain of ST$_{\rm{b}}$-PR link and $F_{|\bar h_{\rm{s}_{\rm{b}} {\rm{p}}}|^2}(x) = 1-\exp\left(-\frac{x}{\lambda _{\rm{sp}}}\right)$. Besides, $q=1$ for PR and $q=m+1$ for SR$_{m}$.
	
In the case that $\gamma_{0} < \min \left[ 1/(\rho I_{\rm{SI}}), \alpha_0/\left(\sum\limits_{{i} = 1}^{M} \alpha_i\right) \right]$ and using \eqref{eq:Phpsb} and \eqref{eq:FgSbP}, \eqref{eq:Phi2} can be further derived as:
\begin{align}\label{eq:Phi21}
\Phi_2 & = \int\limits_{\mu/\bar\gamma}^{\infty} \Pr\left( \gSbP < \frac{\Theta_0}{\bar \gamma x} \right) f_{||\textbf{h}_{{\rm{p}}{\rm{s}}_{\rm{b}}}||^2}(x) dx \nonumber \\
& = \int\limits_{\mu/\bar\gamma}^{\infty} \iota_{q} \sum\limits_{{c} = 0}^{Q-q} \sum\limits_{{n} = 0}^{q+c} {\left( \begin{array}{cc}
{Q-q}\\
{c}
\end{array} \right)} {\left( \begin{array}{cc}
{q+c}\\
{n}
\end{array} \right)} \frac{(-1)^{c+n}}{q+c} \exp\left(-\frac{n \Theta_0}{\lambda _{\rm{sp}} \bar \gamma x}\right) \sum\limits_{{l\rm{ = 1}}}^{K} {\left( \begin{array}{l}
{K}\\
{l}
\end{array} \right)} {\left( { - 1} \right)^{{l}}} \exp \left( - \frac{l x}{\lambda _{\rm{ps}}} \right) \nonumber \\
& \times \sum\limits_{{j\rm{ = 0}}}^{l(N - 1)} \frac{C_{j}}{{\lambda _{\rm{ps}}}^j} \left( j x^{j - 1} - \frac{l}{\lambda _{\rm{ps}}} x^{j} \right) dx \nonumber \\
&= \int\limits_{\mu/\bar\gamma}^{\infty} \iota_{q} \widetilde \sum \Big(\sim\Big) \frac{(-1)^{c+n+l}}{q+c} \exp\left(-\frac{n \Theta_0}{\lambda _{\rm{sp}} \bar \gamma x}\right) \exp \left( - \frac{l x}{\lambda _{\rm{ps}}} \right) \frac{C_{j}}{{\lambda _{\rm{ps}}}^j} \left( j x^{j - 1} - \frac{l}{\lambda _{\rm{ps}}} x^{j} \right) dx,
\end{align}
where $\widetilde \sum = \sum\limits_{{c} = 0}^{Q-q} \sum\limits_{{n} = 0}^{q+c} \sum\limits_{{l\rm{ = 1}}}^{K} \sum\limits_{{j\rm{ = 0}}}^{l(N - 1)}$ and $\Big(\sim\Big) = {\left( \begin{array}{cc}
{Q-q}\\
{c}
\end{array} \right)} {\left( \begin{array}{cc}
{q+c}\\
{n}
\end{array} \right)} {\left( \begin{array}{l}
{K}\\
{l}
\end{array} \right)}$.
	
Since the integral in \eqref{eq:Phi21} can not be further simplified, we use the following approximation $e^{-\alpha/x}\approx1-\alpha/x$ for large values of $|x|$ \cite{27}. Hence, \eqref{eq:Phi21} can be further given by
\begin{align}\label{eq:Phi22}
\Phi_2 & \approx \int\limits_{\mu/\bar\gamma}^{\infty} \iota_{q} \widetilde \sum \Big(\sim\Big) \frac{(-1)^{c+n+l}}{q+c} \exp \left( - \frac{l x}{\lambda _{\rm{ps}}} \right) \frac{C_{j}}{{\lambda _{\rm{ps}}}^j} \left( j x^{j - 1} - \frac{l}{\lambda _{\rm{ps}}} x^{j} \right) dx -  \int\limits_{\mu/\bar\gamma}^{\infty} \iota_{q} \widetilde \sum \Big(\sim\Big) \frac{(-1)^{c+n+l}}{q+c} \nonumber \\
&\times \frac{n \Theta_0}{\lambda _{\rm{sp}} \bar \gamma x} \exp \left( - \frac{l x}{\lambda _{\rm{ps}}} \right) \frac{C_{j}}{{\lambda _{\rm{ps}}}^j} \left( j x^{j - 1} - \frac{l}{\lambda _{\rm{ps}}} x^{j} \right) dx.
\end{align}
	
With the help of [28, Eq. (3.381.3)] and after some manipulation steps, \eqref{eq:Phi22} can be further solved as:
\begin{align}\label{eq:Phi23}
\Phi_2 & \approx \iota_{q} \widetilde \sum \left(\sim\right) \frac{(-1)^{c+n+l}}{q+c} \frac{C_{j}}{{\lambda _{\rm{ps}}}^j} \left[j\left( \frac{l}{\lambda _{\rm{ps}}} \right)^{-j} \Gamma\left( j, \frac{l \mu}{\lambda _{\rm{ps}} \bar\gamma} \right) - \left(\frac{l}{\lambda _{\rm{ps}}} \right)^{-j} \Gamma\left( j+1, \frac{l \mu}{\lambda _{\rm{ps}} \bar\gamma} \right) \right] \nonumber \\
&- \iota_{q} \widetilde \sum \left(\sim\right) \frac{(-1)^{c+n+l}}{q+c} \frac{n \Theta_0}{\lambda _{\rm{sp}} \bar \gamma} \frac{C_{j}}{{\lambda _{\rm{ps}}}^j} \left[ j\left( \frac{l}{\lambda _{\rm{ps}}} \right)^{-j+1} \Gamma\left(j-1, \frac{l \mu}{\lambda _{\rm{ps}} \bar\gamma}\right) - \left( \frac{l}{\lambda _{\rm{ps}}} \right)^{-j+1} \Gamma\left( j, \frac{l \mu}{\lambda _{\rm{ps}} \bar\gamma}\right)\right] \nonumber \\
& \approx \iota_{q} \widetilde \sum \left(\sim\right) \frac{(-1)^{c+n+l}}{q+c} \frac{C_{j}}{{l}^j} \bigg[ \left(j + \frac{n \Theta_0 l}{\lambda _{\rm{sp}} \lambda _{\rm{ps}} \bar \gamma} \right) \Gamma\left( j, \frac{l \mu}{\lambda _{\rm{ps}} \bar\gamma}\right) - \Gamma\left( j+1, \frac{l \mu}{\lambda _{\rm{ps}} \bar\gamma} \right) \nonumber \\
& - \frac{n \Theta_0 j l}{\lambda _{\rm{sp}} \lambda _{\rm{ps}} \bar \gamma} \Gamma\left(j-1, \frac{l \mu}{\lambda _{\rm{ps}} \bar\gamma}\right) \bigg].
\end{align}
	
Finally, substituting \eqref{eq:Phi1} and \eqref{eq:Phi23} into \eqref{eq:PP}, we obtain the desired result as in \eqref{eq:PP1}.  
	
\section{Proof of Theorem 2}\label{theorem2}
	
Denote the second term $\Pr\left( R_{{\Sbb},x_{0}} \geq R_0, \overline{P}_{{\rm{r}}_m} \right)$ in \eqref{eq:Prm} as $\Upsilon$. Since the first term in \eqref{eq:Prm} is already derived in \eqref{eq:Phi1}, we will derive the second term $\Upsilon$. $\overline{P}_{{\rm{r}}_m}$ in $\Upsilon$ can be written as:
\begin{equation}\label{eq:oPR}
\overline{P}_{{\rm{r}}_m} = 1 - \Pr \left( {\rm{E}}^{c}_{{\rm{r}}_m,0} \cap \ldots \cap {\rm{E}}^{c}_{{\rm{r}}_m,m} \right),
\end{equation}
where ${\rm{E}}^{c}_{{\rm{r}}_m,m'} = R_{{\rm{r}}_m,x_{m'}} \geq \bar{R}_{m'}$, $0 \leq m' \leq m$, denotes the event that SR$_{m}$ successfully decodes $x_{m'}$. From \eqref{eq:Rmxk} and \eqref{eq:Rmxm}, $\Pr \left( {\rm{E}}^{c}_{{\rm{r}}_m,m'} \right)$ is expressed as:
\begin{align}\label{eq:PR1}
&\Pr \left({\rm{E}}^{c}_{{\rm{r}}_m,m'}\right) \nonumber \\
&= \Pr \left[ \frac{\alpha_{m'} \rho \bar \gamma ||\textbf{h}_{{{\rm{p}}\rm{s}}_{\rm{b}}}||^{2} \gSbRm}{\sum\limits_{{i} = m'+1}^{M} \alpha_{i} \rho \bar \gamma ||\textbf{h}_{{{\rm{p}}\rm{s}}_{\rm{b}}}||^{2} \gSbRm + 1} \geq \gamma_{m'} \right] \nonumber \\
& = \Pr \left[ ||\textbf{h}_{{{\rm{p}}\rm{s}}_{\rm{b}}}||^{2} \gSbRm \geq \frac{\gamma_{m'}}{\bar \gamma \rho \left(\alpha_{m'} - \sum\limits_{{i = m'+1}}^{M} \alpha_i \gamma_{m'} \right)} \right] \nonumber \\
& =  \Pr \left( ||\textbf{h}_{{{\rm{p}}\rm{s}}_{\rm{b}}}||^{2} \gSbRm \geq \frac{\Theta_{m'}}{\bar \gamma} \right),
\end{align}  
when $\gamma_{m'} \leq \alpha_{m'} / \left(\sum\limits_{{i = m'+1}}^{M} \alpha_i\right)$; otherwise, SR$_{m}$ suffers from the outage. According to \eqref{eq:oPR} and \eqref{eq:PR1}, $\overline{P}_{{\rm{r}}_m}$ is revised as:   
\begin{align}\label{eq:PR2}
\overline{P}_{{\rm{r}}_m} & = 1 - \Pr \left[ ||\textbf{h}_{{{\rm{p}}\rm{s}}_{\rm{b}}}||^{2} \gSbRm \geq \frac{\max \left(\Theta_0,\Theta_1,\ldots,\Theta_m\right)}{\bar \gamma} \right] \nonumber \\
& = \Pr \left( ||\textbf{h}_{{{\rm{p}}\rm{s}}_{\rm{b}}}||^{2} \gSbRm \leq \frac{\Theta}{\bar \gamma} \right).
\end{align}
	
Next, substituting \eqref{eq:PR2} into $\Upsilon$ in \eqref{eq:Prm}, $\Upsilon$ is rewritten as:
\begin{equation}\label{eq:PR21}
\Upsilon = \Pr \left( ||\textbf{h}_{{{\rm{p}}\rm{s}}_{\rm{b}}}||^{2} \geq \frac{\mu}{\bar \gamma}, ||\textbf{h}_{{{\rm{p}}\rm{s}}_{\rm{b}}}||^{2} \gSbRm \leq \frac{\Theta}{\bar \gamma} \right).
\end{equation}
	
Following the same steps to derive $\Phi_{2}$ in Appendix \ref{theorem1}, $\Upsilon$ is lastly obtained as follows:
\begin{align}\label{eq:Upsilon}
\Upsilon &\approx \iota_{q} \widetilde \sum \Big(\sim\Big) \frac{(-1)^{c+n+l}}{q+c} \frac{C_{j}}{{l}^j} \bigg[ \left(j + \frac{n \Theta l}{\lambda _{\rm{sr}} \lambda _{\rm{ps}} \bar \gamma} \right) \Gamma\left( j, \frac{l \mu}{\lambda _{\rm{ps}} \bar\gamma}\right) - \Gamma\left( j+1, \frac{l \mu}{\lambda _{\rm{ps}} \bar\gamma} \right) \nonumber \\
& - \frac{n \Theta j l}{\lambda _{\rm{sr}} \lambda _{\rm{ps}} \bar \gamma} \Gamma\left(j-1, \frac{l \mu}{\lambda _{\rm{ps}} \bar\gamma}\right) \bigg].
\end{align} 
	
Finally, substituting \eqref{eq:Phi1} and \eqref{eq:Upsilon} into \eqref{eq:Prm}, the final result is obtained as in \eqref{eq:Prm2}.

\section{Proof of Proposition 1}\label{proposition 1}

Let us define $\mathcal{F}(\mathbf{t})\triangleq \prod_{i=0}^{M} t_i$. Note that $\mathcal{F}(\mathbf{t}) \geq \mathcal{F}^{\left(\tau\right)}(\mathbf{t}), \forall \mathbf{t}$, and $\mathcal{F}(\mathbf{t}^{\left(\tau\right)}) = \mathcal{F^{\left(\tau\right)}}(\mathbf{t}^{\left(\tau\right)})$. Further, $\mathcal{F^{\left(\tau\right)}}(\mathbf{t}^{\left(\tau + 1\right)}) > \mathcal{F^{\left(\tau\right)}}(\mathbf{t}^{\left(\tau\right)})$ whenever $(\mathbf{t}^{\left(\tau + 1\right)}) \neq (\mathbf{t}^{\left(\tau\right)})$ since the former and the latter are the optimal solution and a feasible point for \eqref{eq:max4}, respectively. Hence, $\mathcal{F}(\mathbf{t}^{\left(\tau + 1\right)}) \geq \mathcal{F}^{\left(\tau\right)}(\mathbf{t}^{\left(\tau + 1\right)}) > \mathcal{F}^{\left(\tau\right)}(\mathbf{t}^{\left(\tau\right)}) = \mathcal{F}(\mathbf{t}^{\left(\tau\right)})$, presenting that $(\mathbf{t}^{\left(\tau + 1\right)})$ is a better feasible point than $(\mathbf{t}^{\left(\tau\right)})$ for problem \eqref{eq:max2}. The sequence $(\mathbf{t}^{\left(\tau\right)})$ of improved
feasible points for \eqref{eq:max2} thus converges at least to a locally
optimal solution which satisfies the KKT conditions \cite{3op}. As a result, the objective value of \eqref{eq:max4} is monotonically increasing, i.e., $\prod_{i=0}^{M} t^{\left(\tau\right)}_i \geq \prod_{i=0}^{M} t^{\left(\tau - 1\right)}_i$.

\begingroup
\setstretch{0.98}
\bibliographystyle{IEEEtran}

\begin{thebibliography}{1}

\bibitem{in}
Q. N. Le, N.-P. Nguyen, A. Yadav, and O. A. Dobre, ``Outage performance of full-duplex overlay CR-NOMA networks with SWIPT,'' \emph{in Proc. IEEE Global Commun. Conf. (GLOBECOM),} Waikoloa, HI, Dec. 2019, pp. 1-6.
		
\bibitem{13}
S. M. R. Islam, N. Avazov, O. A. Dobre, and K.-S. Kwak, ``Power-domain non-orthogonal multiple access (NOMA) in 5G systems: Potentials and challenges,'' \emph{IEEE Commun. Surveys Tuts.,} vol. 19, no. 2, pp. 721-742, 2nd Quart. 2017.
		
\bibitem{16}
D. Wan, M. Wen, F. Ji, H. Yu, and F. Chen, ``Non-orthogonal multiple access for cooperative communications: Challenges, opportunities, and trends,'' \emph{IEEE Wireless Commun.,} vol. 25, no. 2, pp. 109-117, Apr. 2018.
		
\bibitem{17}	
S. M. R. Islam, M. Zeng, O. A. Dobre, and K. Kwak, \emph{Non-Orthogonal Multiple Access (NOMA): How It Meets 5G and Beyond.} New Jersey, USA: Wiley 5G Ref, 2019.
		
\bibitem{29}
A. Yadav and O. A. Dobre, ``All technologies work together for good: A glance to future mobile networks,'' \emph{IEEE Wireless Commun.,} vol. 25, no. 4, pp. 10-16, Aug. 2018.
		
\bibitem{15}
J.-B. Kim, I.-H. Lee, and J. Lee, ``Capacity scaling for D2D aided cooperative relaying systems using NOMA,'' \emph{IEEE Wireless Commun. Lett.,} vol. 7, no. 1, pp. 42-45, Feb. 2018.
		
\bibitem{19}
Y. Liu, Z. Ding, M. Elkashlan, and H. V. Poor, ``Cooperative non-orthogonal multiple access with simultaneous wireless information and power transfer,'' \emph{IEEE J. Sel. Areas Commun.,} vol. 34, no. 4, pp. 938-953, Apr. 2016.
		
\bibitem{20}
M. Ashraf, A. Shahid, J. W. Jang, and K.-G. Lee, ``Energy harvesting non-orthogonal multiple access system with multi-antenna relay and base station,'' \emph{IEEE Access,} vol. 5, pp. 17660-17670, Sep. 2017.
		
\bibitem{24}
G. I. Tsiropolous, A. Yadav, M. Zeng, and O. A. Dobre, ``Cooperation in 5G HetNets: Advanced spectrum access and D2D assisted communications,'' \emph{IEEE Wireless Commun.,}  vol. 24, no. 5, pp. 110-117, Oct. 2017.
		
\bibitem{25}
A. Yadav, G. I. Tsiropolous, and O. A. Dobre, ``Full-duplex communications: Performance in ultradense mm-wave small-cell wireless networks,'' \emph{IEEE Vehic. Tech. Mag,}  vol. 13, no. 2, pp. 40-47, Apr. 2018.
		
\bibitem{22}
B. V. Nguyen, H. Jung and K. Kim, ``Physical layer security schemes for full-duplex cooperative systems: State of the art and beyond,'' \emph{IEEE Commun. Mag.,} vol. 56, no. 11, pp. 131-137, Nov. 2018.
		
\bibitem{21}
X. Yue, Y. Liu, S. Kang, A. Nallanathan, and Z. Ding, ``Exploiting full/half-duplex user relaying in NOMA systems,'' \emph{IEEE Trans. Commun.,} vol. 66, no. 2, pp. 560-575, Feb. 2018.
		
\bibitem{4}
Y. Alsaba, C. Y. Leow, and S. K. A. Rahim, ``Full-duplex cooperative non-orthogonal multiple access with beamforming and energy harvesting,'' \emph{IEEE Access,} vol. 6, pp. 19726-19738, Apr. 2018.
		
\bibitem{14}
L. Lv, J. Chen, Q. Ni, Z. Ding, and H. Jiang, ``Cognitive non-orthogonal multiple access with cooperative relaying: A new wireless frontier for 5G spectrum sharing,'' \emph{IEEE Commun. Mag.,} vol. 56, no. 4, pp. 188-195, Apr. 2018.
		
\bibitem{18}
F. Zhou, Y. Wu, Y.-C. Liang, Z. Li, Y. Wang, and K.-K. Wong, ``State of the art, taxonomy, and open issues on cognitive radio networks with NOMA,'' \emph{IEEE Wireless Commun.,} vol. 25, no. 2, pp. 100-108, Apr. 2018.
		
\bibitem{3}
Y. Yu, H. Chen, Y. Li, Z. Ding, and L. Zhuo, ``Antenna selection in MIMO cognitive radio-inspired NOMA systems,'' \emph{IEEE Commun. Lett.,} vol. 21, no. 12, pp. 2658-2661, Dec. 2017.
		
\bibitem{23}
L. Lv, J. Chen, Q. Ni and Z. Ding, ``Design of cooperative non-orthogonal multicast cognitive multiple access for 5G systems: User scheduling and performance analysis,'' \emph{IEEE Trans. Commun.,} vol. 65, no. 6, pp. 2641-2656, Jun. 2017.
		
\bibitem{26}
Y. Wang, Y. Wu, F. Zhou, Z. Chu, Y. Wu, and F. Yuan, ``Multi-objective resource allocation in a NOMA cognitive radio network with a practical non-linear energy harvesting model,'' \emph{IEEE Access,} vol. 6, pp. 12973-12982, Dec. 2017.
		
\bibitem{5}
L. Lv, Q. Ni, Z. Ding, and J. Chen, ``Application of non-orthogonal multiple access in cooperative spectrum-sharing networks over Nakagami-m fading channels,'' \emph{IEEE Trans. Veh. Technol.,} vol. 66, no. 6, pp. 5506-5511, Jun. 2017.
		
\bibitem{6}
L. Lv, L. Yang, H. Jiang, T. H. Luan, and J. Chen, ``When NOMA meets multiuser cognitive radio: Opportunistic cooperation and user scheduling,'' \emph{IEEE Trans. Veh. Technol.,} vol. 67, no. 7, pp. 6679-6684, Jul. 2018.
		
\bibitem{7}
S. Lee, T. Q. Duong, D. B. da Costa, D.-B. Ha, and S. Q. Nguyen, ``Underlay cognitive radio networks with cooperative non-orthogonal multiple access,'' \emph{IET Commun.,} vol. 12, no. 3, pp. 359-366, Feb. 2018.
		
\bibitem{2}	
M. F. Kader, S. Y. Shin, and V. C. M. Leung, ``Full-duplex non-orthogonal multiple access in cooperative relay sharing for 5G systems,'' \emph{IEEE Trans. Veh. Technol.,} vol. 67, no. 7, pp. 5831-5840, Jul. 2018.
		
\bibitem{11}
H. Xing, X. Kang, K.-K. Wong, and A. Nallanathan, ``Optimizing DF cognitive radio networks with full-duplex-enabled energy access points,'' \emph{IEEE Trans. Wireless Commun.,} vol. 16, no. 7, pp. 4683-4697, Jul. 2017.
		
\bibitem{8}
B. Cai, Y. Chen, Y. Chen, Y. Cao, N. Zhao, and Z. Ding, ``A novel spectrum sharing scheme assisted by secondary NOMA relay,'' \emph{IEEE Wireless Commun. Lett.,} vol. 7, no. 5, pp. 732–735, Oct. 2018.
		
\bibitem{9}
L. Zhao, X. Wang, and T. Riihonen, ``Transmission rate optimization of full-duplex relay systems powered by wireless energy transfer,'' \emph{IEEE Trans. Wireless Commun.,} vol. 16, no. 10, pp. 6438-6450, Oct. 2017.
		
\bibitem{30}
A. Yadav, O. A. Dobre, and H. V. Poor, ``Is self-interference in full-duplex communications a foe or a friend?,'' \emph{IEEE Signal Process. Lett.,} vol. 25, no. 7, pp. 951-955, Jul. 2018.
		
\bibitem{27}
Q. N. Le, V. N. Q. Bao, and B. An, ``Full-duplex distributed switch-and-stay energy harvesting selection relaying networks with imperfect CSI: Design and outage analysis,'' \emph{J. Commun. Netw.,} vol. 20, no. 1, pp. 29-46, Feb. 2018.
		
\bibitem{10}
I. S. Gradshteyn and I. M. Ryzhik, ``Table of integrals, series, and products,'' New York, NY, USA, Elsevier, 2007.

\bibitem{3op}
B. R. Marks and G. P. Wright, “A general inner approximation algorithm for nonconvex mathematical programs,” \emph{Operations Research}, vol. 26, no. 4, pp. 681-683, July 1978.
		
\bibitem{1op}
M. F. Hanif, Z. Ding, T. Ratnarajah, and G. K. Karagiannidis, ``A minorization-maximization method for optimizing sum rate in the downlink of non-orthogonal multiple access systems,'' \emph{IEEE Trans. Signal Process.,} vol. 64, no. 1, pp. 76-88, Jan. 2016.
		
\bibitem{1}
Y. Zeng and R. Zhang, ``Full-duplex wireless-powered relay with self-energy recycling,'' \emph{IEEE Wireless Commun. Lett.,} vol. 4, no. 2, pp. 201-204, Apr. 2015.
		
\bibitem{12}
Z. Hu, C. Yuan, F. Zhu, and F. Gao, ``Weighted sum transmit power minimization for full-duplex system with SWIPT and self-energy recycling,'' \emph{IEEE Access,} vol. 4, pp. 4874-4881, Jul. 2016.
		
\end{thebibliography}

\endgroup

\end{document}